\newcommand{\eps}{\varepsilon}
\DeclareMathOperator{\M}{\mathcal M}
\begin{document}

\begin{frontmatter}
\title{Complexity of Detectability, Opacity and A-Diagnosability\\ for Modular Discrete Event Systems}

\author[tm]{Tom{\' a}{\v s}~Masopust}\ead{masopust{@}math.cas.cz}
\and
\author[xy]{Xiang Yin}\ead{xiangyin@umich.edu}

\address[tm]{Institute of Mathematics, Czech Academy of Sciences, {\v Z}i{\v z}kova 22, 616 62 Brno, Czechia}
\address[xy]{Department of Automation, Shanghai Jiao Tong University, Shanghai 200240, China}

\begin{keyword} 
  Discrete event systems; Finite automata; Detectability; Opacity; A-Diagnosability; Complexity
\end{keyword}

\begin{abstract}
  We study the complexity of deciding whether a modular discrete event system is detectable (resp. opaque, A-diagnosable).
  Detectability arises in the state estimation of discrete event systems,
  opacity is related to the privacy and security analysis,
  and A-diagnosability appears in the fault diagnosis of stochastic discrete event systems.
  Previously, deciding weak detectability (opacity, A-diagnosability) for monolithic systems was shown to be PSPACE-complete. In this paper, we study the complexity of deciding weak detectability (opacity, A-diagnosability) for modular systems. We show that the complexities of these problems are significantly worse than in the monolithic case. Namely, we show that deciding modular weak detectability (opacity, A-diagnosability) is EXPSPACE-complete.
  We further discuss a special case where all unobservable events are private, and show that in this case the problems are PSPACE-complete. Consequently, if the systems are all fully observable, then deciding weak detectability (opacity) for modular systems is PSPACE-complete.
\end{abstract}

\end{frontmatter}

\section{Introduction}
  Discrete event systems (DES) are dynamical systems with discrete state-spaces and event-triggered dynamics.
  In most real-world applications,
  DES models are obtained by several local modules, modeled as finite-state automata, running synchronously.  This leads to the research on modular DES.
  In the study of modular DES, the main challenge is the state-space exploration problem -- the number of states in the monolithic model grows exponentially fast as the  number of local modules increases.
  Understanding the computational complexity therefore becomes the essential problem in the analysis of modular DES, which has drawn a considerable attention in the literature.
  For example, Gohari and Wonham~\cite{gohari2000complexity} and Rohloff and Lafortune~\cite{rohloff2005pspace} investigated the complexities of standard supervisory control problems for modular DES.
  The complexities of diagnosability, detectability and predictability for modular DES have also been investigated~\cite{YinLafortune17} as well as the complexity of verifying nonblockingness for modular DES~\cite{Masopust2017Complexity}.
  It turns out that many problems tractable in the monolithic setting become intractable in the modular setting due to the state-space exploration issue.
  There are many results in the literature on finding tractable solutions for problems in the context of modular DES~\cite{feng2008supervisory,gummadi2011tractable,hill2010multi,komenda2014coordination,leduc2005hierarchical,saboori2010reduced,schmidt2012efficient}.

  Detectability, opacity and diagnosability are important system-theoretic properties of discrete event systems.
  Detectability arises in the state estimation of DES.
  In particular, weak detectability asks whether the current state of the system can be determined unambiguously after a finite number of observations via some trajectory~\cite{ShuLinYing2007}.
  Opacity is a property related to the privacy and security analysis of DES~\cite{jacob2016overview}. The system has a secret modeled as a set of states and an intruder is modeled as a passive observer with a limited observation. The system is opaque if the intruder never knows for sure that the system is in a secret state, that is, the system never reveals its secret.
  Fault diagnosis is another important task in DES. Several different notions of diagnosability have been proposed in the literature~\cite{zaytoon2013overview}. For example, the notion of diagnosability of Sampath et al.~\cite{sampath1995diagnosability} requires that an occurrence of a fault can always be detected within a finite delay. Thorsley and Teneketzis~\cite{thorsley2005diagnosability} proposed a weaker version of diagnosability, called A-diagnosability. Compared to diagnosability, where the fault has to be detected on every path within a finite delay, A-diagnosability requires that there always exists a possibility to detect the fault event, and hence the probability of detection goes to one as the length of the trajectory increase.

  It has been shown in the literature that deciding weak detectability~\cite{Zhang17,Masopust2017Detectability}, opacity~\cite{CassezDM12} and A-diagnosability~\cite{bertrand2014foundation,chenrevised} are PSPACE-complete problems for monolithic systems.
  However, these properties have not yet been fully investigated for modular systems.
  A PSPACE lower-bound has been shown for weak detectability in the modular setting~\cite{YinLafortune17} and a question remains whether the bound is tight or not.
  Regarding opacity in the context of modular DES, Saboori and Hadjicostis~\cite{saboori2010reduced} proposed an exponential-time algorithm for the verification of modular opacity under certain restrictive assumptions.
  However, the precise complexity for deciding opacity is still open for modular DES in the general case without any restrictive assumptions.
  Moreover, to the best of our knowledge, A-diagnosability has only been investigated for monolithic DES and its complexity in the modular setting is open.

  In this paper, we investigate the computational complexity of deciding detectability, opacity and A-diagnosability for modular discrete event systems.
  Our contributions are as follows.
  We first show that deciding any of the properties is EXPSPACE-complete for modular systems, and hence significantly more complex than their monolithic counterparts.
  Consequently, there is, in general, neither a polynomial-time nor a polynomial-space algorithm verifying these properties.
  Previously, only weak detectability has been shown to be PSPACE-hard for modular systems~\cite{YinLafortune17} and the complexities of opacity and A-diagnosability have never been considered in the literature.
  Our results provide complete complexity results for these three properties.
  Furthermore, we  investigate a special case where all unobservable events are private.  In this case, we show that the problems are easier, namely weak detectability, opacity and A-diagnosability are PSPACE-complete, and hence solvable by polynomial-space algorithms.
  Our complexity results are obtained in a uniform manner in the sense that a similar construction is used to show hardness of all the properties.
  We believe that our results bring new insight into the relationship among detectability, opacity and A-diagnosability from the computational complexity point of view.

\section{Preliminaries and Definitions}
  For a set $A$, $|A|$ denotes the cardinality of $A$, and $2^{A}$ the power set of $A$. An {\em alphabet\/} $\Sigma$ is a finite nonempty set with elements called {\em events}. A {\em word\/} over $\Sigma$ is a sequence of events of $\Sigma$. Let $\Sigma^*$ denote the set of all finite words over $\Sigma$; the {\em empty word\/} is denoted by $\varepsilon$. For a word $u \in \Sigma^*$, $|u|$ denotes its length. As usual, the notation $\Sigma^+$ stands for $\Sigma^*\setminus\{\varepsilon\}$.

  A {\em nondeterministic finite automaton\/} (NFA) over an alphabet $\Sigma$ is a structure $A = (Q,\Sigma,\delta,I,F)$, where $Q$ is a finite nonempty set of states, $I\subseteq Q$ is a nonempty set of initial states, $F \subseteq Q$ is a set of marked states, and $\delta \colon Q\times\Sigma \to 2^Q$ is a transition function that can be extended to the domain $2^Q\times\Sigma^*$ by induction. The {\em language generated by $A$\/} is the set $L(A) = \{w\in \Sigma^* \mid \delta(I,w)\neq\emptyset\}$ and
  the {\em language recognized by $A$\/} is the set $L_m(A) = \{w\in \Sigma^* \mid \delta(I,w)\cap F \neq\emptyset\}$. Equivalently, the transition function $\delta$ can be seen as a relation $\delta \subseteq Q\times \Sigma \times Q$.

  A {\em discrete event system\/} (DES) is modeled as an NFA $G$ with all states marked. Therefore we simply write $G=(Q,\Sigma,\delta,I)$ without specifying the set of marked states. The alphabet $\Sigma$ is partitioned into two disjoint subsets $\Sigma_o$ and $\Sigma_{uo}=\Sigma\setminus\Sigma_o$, where $\Sigma_o$ is the set of {\em observable events\/} and $\Sigma_{uo}$ the set of {\em unobservable events}.

  The problems studied in this paper are based on the observation of events. The observation is described by a projection $P\colon \Sigma^* \to \Sigma_o^*$. The {\em projection} $P\colon \Sigma^* \to \Sigma_o^*$ is a morphism defined by $P(a) = \varepsilon$ for $a\in \Sigma\setminus \Sigma_o$, and $P(a)= a$ for $a\in \Sigma_o$. The action of $P$ on a word $w=\sigma_1\sigma_2\cdots\sigma_n$ with $\sigma_i \in \Sigma$ for $1\le i\le n$ is to erase all events from $w$ that do not belong to $\Sigma_o$; namely, $P(\sigma_1\sigma_2\cdots\sigma_n)=P(\sigma_1) P(\sigma_2) \cdots P(\sigma_n)$. The definition can readily be extended to infinite words and languages.

  Let $G=(Q,\Sigma,\delta,I)$ be a discrete event system, and let $P$ be a projection from $\Sigma$ to $\Delta\subseteq\Sigma$. We use the notation $P(G)$ to denote the DES $P(G)=(Q,\Delta,\delta',I)$, where the transition function $\delta' = \{ (p,P(a),q) \mid (p,a,q)\in\delta \}$. Intuitively, $P(G)$ has the same structure as $G$ with unobservable transitions labeled with $\eps$.

  As usual when partially-observed DES are studied~\cite{sampath1995diagnosability,ShuLin2011}, we make the following two assumptions on the DES $G=(Q,\Sigma,\delta,I)$:
    (1) $G$ is {\em deadlock free}, that is, for every state of the system, at least one event can occur. Formally, for every $q\in Q$, there is $\sigma \in \Sigma$ such that $\delta(q,\sigma)\neq\emptyset$.
    (2) No loop in $G$ consists solely of unobservable events: for every $q\in Q$ and every $w \in \Sigma_{uo}^+$, $q\notin \delta(q,w)$.

  In many applications, DES $G$ is obtained by the parallel composition of a set of local modules $\{G_1,G_2,\dots,G_n\}$, where $G_i=(Q_i,\Sigma_i,\delta_i,I_i)$.
  That is, $G=G_1\|G_2\|\cdots\|G_n$,
  where ``$\|$" denotes the parallel composition operator~\cite[p. 78]{Lbook}.

  A {\em decision problem\/} is a yes-no question, such as ``Is the language $L_m(A)$ of an automaton $A$ empty?'' A decision problem is {\em decidable\/} if there exists an algorithm solving the problem. Complexity theory classifies decidable problems into classes based on the time or space an algorithm needs to solve the problem. The complexity classes we consider in this paper are PSPACE and EXPSPACE denoting the classes of problems solvable by a deterministic polynomial-space algorithm and by a deterministic exponential-space algorithm, respectively. A decision problem is PSPACE-complete (resp. EXPSPACE-complete) if it belongs to PSPACE (resp. EXPSPACE) and every problem from PSPACE (resp. EXPSPACE) can be reduced to it by a deterministic polynomial-time algorithm. By the space hierarchy theorem~\cite{StearnsHL65}, we know that PSPACE is a strict subclass of EXPSPACE. Thus, if a problem is EXPSPACE-complete, there is no polynomial-space and hence no polynomial-time algorithm solving the problem.

\section{Modular Detectability}
  In this section, we investigate the complexity of deciding modular detectability.
  First, we recall the definitions of two basic variants of detectability~\cite{ShuLinYing2007} and then we define their modular counterparts that we investigate in this paper.

  Still, let $\Sigma$ be an alphabet, $\Sigma_o\subseteq\Sigma$ be the set of observable events, and $P$ be the projection from $\Sigma$ to $\Sigma_o$. Let $\mathbb{N}$ denote the set of all natural numbers.

  The set of infinite sequences of events generated by a DES $G$ is denoted by $L^\omega (G)$. For $w \in L^\omega (G)$, we denote the set of its prefixes by $Pr(w)$.

  \begin{defn}[Strong detectability]
    A discrete event system $G=(Q,\Sigma,\delta,I)$ is strongly detectable with respect to $\Sigma_{uo}$ if we can determine, after a finite number of observations, the current and subsequent states of the system for all trajectories of the system, that is,
      $(\exists n \in \mathbb{N}) (\forall s \in L^\omega(G))(\forall t \in Pr(s))
      [|P(t)| > n \Rightarrow |R_G(t)| = 1]$,
    where $R_G(t) = \{x \in Q \mid \exists\, t' \in L(G) \textrm{ such that } P(t) = P(t') \text{ and } x \in \delta(I,t')\}$.
  \end{defn}

  \begin{defn}[Strong periodic detectability]
    A discrete event system $G=(Q,\Sigma,\delta,I)$ is strongly periodically detectable with respect to $\Sigma_{uo}$ if we can periodically determine the current state of the system for all trajectories of the system, that is,
    $(\exists n \in \mathbb{N}) (\forall s \in L^\omega(G))(\forall t \in Pr(s))(\exists t' \in \Sigma^*)
     [tt'\in Pr(s) \land |P(t')| < n \land |R_G(tt')| = 1]$.
  \end{defn}

  The modular version of the problems is defined as follows.
  \begin{defn}[Strong (periodic) modular detectability]
    Given a set of discrete event systems $\{G_1,G_2,\ldots,G_n\}$ and a set of unobservable events $\Sigma_{uo}$. The strong (periodic) modular detectability problem asks whether the discrete event system $G_1\|G_2\| \cdots \| G_n$ is strongly (periodically) detectable with respect to $\Sigma_{uo}$.
  \end{defn}

  Deciding strong modular detectability is a PSPACE-hard problem~\cite{YinLafortune17}.
  Our first result improves this lower bound by showing that polynomial space is sufficient to solve the problem.

  \begin{thm}
    Deciding strong (periodic) modular detectability is a PSPACE-complete problem.
  \end{thm}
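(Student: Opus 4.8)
The plan is to establish PSPACE-hardness from prior work and to prove the matching upper bound by reducing each problem to a reachability-and-cycle question on a self-composition automaton whose states, although exponentially numerous, each admit a polynomial-size description. The hardness direction is inherited from the PSPACE lower bound for strong modular detectability of~\cite{YinLafortune17}, with the same construction adapted to the periodic variant, so the substance of the proof is membership in PSPACE. The central obstacle in the modular setting is that the usual observer (subset construction) for $G=G_1\|\cdots\|G_n$ has states that are \emph{subsets} of the global state set $Q=Q_1\times\cdots\times Q_n$, and such a subset may need exponentially many bits merely to be written down; any observer-based method therefore costs exponential space. I would avoid the observer entirely by working with the verifier $V=P(G)\times_{\Sigma_o}P(G)$, the synchronous product of two copies of $G$ that forces the copies to agree on every observable event while letting unobservable ($\eps$-labelled) transitions interleave freely. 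A state of $V$ is a pair $(\vec p,\vec q)$ of global states, i.e.\ a pair of $n$-tuples, stored in $O(\sum_i\log|Q_i|)$ bits, and the successors of any state are computable in polynomial time from the modules $G_i$. Consequently any constant conjunction of reachability and cycle-existence queries on $V$ lies in NPSPACE: one guesses a witnessing path one transition at a time, remembering only the current node, a fixed target, and a step counter bounded by $|V|\le(\prod_i|Q_i|)^2$ and hence representable in polynomially many bits. By Savitch's theorem this is in PSPACE.

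For strong detectability I would use the criterion that $G$ is \emph{not} strongly detectable if and only if $V$ has a cycle reachable from an initial pair and, from some state on that cycle, an \emph{off-diagonal} state $(\vec p,\vec q)$ with $\vec p\neq\vec q$ is reachable. Correctness rests on the two standing assumptions: deadlock-freeness guarantees that every finite run extends to an infinite one, and the absence of unobservable loops guarantees that every cycle of $V$ carries at least one observable event. Hence pumping the cycle arbitrarily many times yields observations of unbounded length, after which one continues to the off-diagonal pair; since the two copies share the same observed word, both $\vec p$ and $\vec q$ lie in the state estimate, so $|R_G(t)|\ge 2$ for $|P(t)|$ exceeding any prescribed bound, contradicting strong detectability. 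The converse is a pigeonhole argument: an arbitrarily long observation leading to an off-diagonal pair must, by finiteness of $V$, traverse a cycle before reaching it. Deciding the existence of such a configuration is three reachability tests on $V$ linked by an intermediate cycle, and is therefore in PSPACE by the previous paragraph.

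For strong periodic detectability the analogous condition one is tempted to use --- that $V$ has a reachable cycle all of whose states are off-diagonal --- is \emph{false}, and pinning down the correct criterion is the main difficulty. The failure is that the confusing second trajectory need not persist: the estimate $R_G(t)$ may remain non-singleton forever while the ``phantom'' state witnessing non-singularity dies after one step and is continually re-created from the genuine trajectory by nondeterministic branching, so no single off-diagonal run of $V$ records it. I would therefore work with an augmented verifier in which, at each observable step, the second component may either continue or be refreshed from the first component's current branch, and show that $G$ fails strong periodic detectability exactly when this structure has a reachable cycle on which the two components differ after every observable transition; the deadlock-freeness and no-unobservable-loop assumptions again make such a cycle produce observations with unboundedly long non-singleton stretches. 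The delicate part --- and the step I expect to be hardest --- is the completeness of this criterion: one must prove that whenever the estimate stays non-singleton cofinally along some real trajectory, a witnessing phantom can always be maintained by a single refreshable second component (equivalently, that the phantom lineages can be chosen to branch only off the genuine trajectory), so that no exponential bookkeeping of the full estimate is needed. Once the characterization is in place, checking it is again a reachability-plus-cycle query on a graph with polynomial-size node descriptions, hence in PSPACE, completing the proof.
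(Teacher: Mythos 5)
Your overall strategy matches the paper's: hardness is inherited from~\cite{YinLafortune17}, and the upper bound comes from nondeterministic search over a structure whose states are (at most) pairs of global states of $G_1\|\cdots\|G_n$, hence storable in $O(\sum_i\log|Q_i|)$ bits, with Savitch/closure under complement finishing the argument. The paper implements this by running the Shu--Lin detector $G_{det}$, whose states are one- or two-element subsets of $Q_1\times\cdots\times Q_n$; your twin plant $V$ is the ordered-pair relative of that object. For strong (non-periodic) detectability your argument is complete and correct: the ``reachable cycle from which an off-diagonal pair is reachable'' criterion is sound and complete given deadlock-freeness and the absence of unobservable cycles, and you correctly diagnose why the plain product suffices there but fails for the periodic variant (the confusing state need not persist along a single second trajectory).

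The gap is in the periodic case. Your ``augmented verifier with refresh'' is exactly the Shu--Lin detector --- its transition relation sends a two-element state to an arbitrary two-element subset of the successor set, which is your refresh --- and the completeness statement you defer (that a cofinally non-singleton estimate along a real trajectory can always be witnessed by a single refreshable phantom, so that some reachable cycle of the detector consists entirely of two-element states) is precisely the correctness theorem for $G_{det}$. The paper does not prove it either; it cites~\cite{ShuLin2011} for the equivalence between strong (periodic) detectability and the loop conditions on $G_{det}$. So as written your proof of the periodic variant is incomplete, but the missing lemma is a known result: either cite it, or prove it by a short induction on the length of the observation showing that \emph{every} subset of $R_G(w)$ of cardinality $\min(2,|R_G(w)|)$ is a detector state reachable via $w$ (choose, for each element of the target pair, a source in the previous estimate and apply the induction hypothesis to a pair containing those sources); a pigeonhole argument on runs longer than $|G_{det}|$ then yields the all-non-singleton cycle.
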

  \begin{proof}
    PSPACE-hardness is known~\cite{YinLafortune17}, and the proof also shows PSPACE-hardness of deciding strong periodic modular detectability.

    To show membership in PSPACE, we adjust the polynomial detector $G_{det}$ of Shu and Lin~\cite{ShuLin2011}. For a DES $G$, Shu and Lin construct an NFA $G_{det}$ of polynomial size whose states are subsets of states of $G$ of cardinality one or two\footnote{Shu and Lin construct the automaton $G_{det}$ with a single initial state that is a subset of the set of states not necessarily of cardinality one or two. However, their construction can easily be modified so that we have a set of initial states each of cardinality at most two, and hence the cardinality condition is satisfied by all states.}, such that $G$ is strongly detectable if and only if
    (a) any state reachable from any loop in $G_{det}$ is of cardinality one;
    and strongly periodically detectable if and only if
    (b) all loops in $G_{det}$ include at least one state of cardinality one.

    Let $G=\|_{i=1}^{n} G_i$ be a modular system. The state set of $G$ consists of $n$-tuples of states of $G_i$. Constructing the detector $G_{det}$ using the construction of Shu and Lin~\cite{ShuLin2011} results in an automaton where every state contains either a single $n$-tuple or two $n$-tuples. Thus, a PSPACE algorithm can store a state of $G_{det}$ in polynomial space and use the nondeterministic search to verify that the property (a) (resp. (b)) is not satisfied. Since PSPACE is closed under complement, there is a PSPACE algorithm that can verify that (a) (resp. (b)) are satisfied, which completes the proof.
  \end{proof}

  Notice that if $G$ is a parallel composition of a constant number of systems, that is, $n\le k$ for some constant $k$, then the problem is NL-complete; NL is the class of problems decidable by a nondeterministic logarithmic-space algorithm. This follows from the fact that every $n$-tuple is then of constant length and requires thus only a logarithmic space to store the constant number of states in binary. Recall that the space hierarchy theorem~\cite{StearnsHL65} implies that NL is a strict subclass of PSPACE, and hence the strong modular detectability problem is significantly simpler if the number of systems is bounded a priori.

  Strong detectability requires that one can always determine the current state of the system unambiguously after a finite number of observations. A weaker version of detectability, so-called weak detectability~\cite{ShuLinYing2007}, requires that one can determine the current state of the system for some trajectory.

  \begin{defn}[Weak detectability]
    A discrete event system $G=(Q,\Sigma,\delta,I)$ is weakly detectable with respect to $\Sigma_{uo}$ if  we can determine, after a finite number of observations, the current and subsequent states of the system for some trajectories of the system, that is,
    $(\exists n \in \mathbb{N}) (\exists s \in L^\omega(G))(\forall t \in Pr(s))
      [ |P(t)| > n \Rightarrow |R_G(t)| = 1]$.
  \end{defn}

  \begin{defn}[Weak periodic detectability]
    A discrete event system $G=(Q,\Sigma,\delta,I)$ is weakly periodically detectable with respect to $\Sigma_{uo}$ if we can periodically determine the current state of the system for some trajectories of the system, that is,
    $(\exists n \in \mathbb{N}) (\exists s \in L^\omega(G))(\forall t \in Pr(s))(\exists t' \in \Sigma^*)
      [ tt'\in Pr(s) \land |P(t')| < n \land |R_G(tt')| = 1]$.
  \end{defn}

  Similarly as strong (periodic) modular detectability we define weak (periodic) modular detectability.
  \begin{defn}[Weak (periodic) modular detectability]
    Given a set of discrete event systems $\{G_1,G_2,\ldots,G_n\}$ and a set of unobservable events $\Sigma_{uo}$. The weak (periodic) modular detectability problem asks whether the system $G_1\|G_2\| \cdots \| G_n$ is weakly (periodically) detectable with respect to $\Sigma_{uo}$.
  \end{defn}

  Is has been shown that deciding weak modular detectability is a PSPACE-hard problem~\cite{YinLafortune17,Zhang17}. However, the precise complexity was left open. The PSPACE-hardness result does not exclude the possibility that the problem can be solvable in polynomial space. We now show that, in fact, the problem requires exponential space.

  We first need the following lemma.
  \begin{lem}\label{lem1}
    Let $\Sigma$ be an alphabet and $n\ge 1$ be a natural number. There exist $n$ six-state automata $A_i$ such that $P( L_m(\|_{i=1}^{n} A_i)) = \Sigma^{2^n-1}$, where $P$ is a projection from the overall alphabet of $\|_{i=1}^{n} A_i$ to $\Sigma$.
  \end{lem}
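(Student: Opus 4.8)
The plan is to realize $\Sigma^{2^n-1}$ as the observable projection of the marked behaviour of an $n$-bit binary counter, distributed so that the $i$-th bit is maintained by its own constant-size automaton $A_i$. I would work over the alphabet $\Gamma = \Sigma \cup \{c_1,\ldots,c_{n-1}\}$, where the events $c_1,\ldots,c_{n-1}$ are fresh, do not belong to $\Sigma$, and are therefore erased by $P$. The intended semantics is that every symbol of $\Sigma$ triggers one increment of the counter, while the auxiliary events $c_i$ propagate carries between adjacent bits. Since an $n$-bit counter reaches its all-ones configuration $11\cdots1 = 2^n-1$ precisely after $2^n-1$ increments, marking exactly that configuration forces every accepted run to contain exactly $2^n-1$ observable symbols, and projecting away the carries yields words over $\Sigma$ of that length.

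Concretely, I would let $A_1$ own the observable alphabet and realize the least significant bit: from its stable ``bit $=0$'' state every $\sigma\in\Sigma$ leads to ``bit $=1$'', and from ``bit $=1$'' every $\sigma\in\Sigma$ leads to a carry-pending state whose only outgoing event is $c_1$, which returns it to ``bit $=0$''. For $2\le i\le n$, the module $A_i$ has alphabet $\{c_{i-1},c_i\}$ and behaves analogously, with $c_{i-1}$ as its increment trigger and $c_i$ as its carry-out; the top module $A_n$ has no carry-out, so an attempted overflow has no continuation. Each module then needs only the two stable bit-value states, a carry-pending state, and a bounded number of auxiliary states (a distinguished initial state and a non-marked sink absorbing blocked overflow), which is where the six-state bound comes from. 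I would declare a global state marked exactly when every $A_i$ sits in its stable ``bit $=1$'' state, so that no run with a pending carry and no run that has overflowed is ever marked.

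With the construction fixed, two directions remain. For soundness, I would argue that a marked word has no pending carries (carry-pending states are unmarked) and has never overflowed (overflow leads to a dead sink), so the represented value is genuinely $2^n-1$; since each observable symbol increments bit $0$ and every $1\to 0$ toggle forces exactly one carry that can be neither skipped nor discarded, the number of observable symbols equals the counter value, namely $2^n-1$, giving $|P(w)|=2^n-1$ for every $w\in L_m(\|_{i=1}^{n} A_i)$. For completeness, I would note that the $\Sigma$-transitions of $A_1$ do not depend on which symbol is read, so any target word $u\in\Sigma^{2^n-1}$ can be produced by reading its symbols in order and, after each read, propagating all pending carries to completion before the next read; this is a legal synchronized run reaching the marked configuration, so $u\in P(L_m(\|_{i=1}^{n} A_i))$.

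The main obstacle I anticipate is reconciling the free interleaving inherent in parallel composition with the rigid sequencing that faithful counting demands: I must ensure that carries are neither lost nor duplicated (enforced by letting a carry-pending state offer only its single carry event), that observable reads cannot outrun carry propagation in a way that corrupts the count (enforced by offering no $\Sigma$-transition from the carry-pending bit-$0$ state), and that the all-ones configuration is reachable once and only after exactly $2^n-1$ reads (enforced by routing overflow into the non-marked sink). Checking that these purely local constraints together make the projection exactly $\Sigma^{2^n-1}$, with no word of a wrong length sneaking in or dropping out, is the delicate bookkeeping that the constant-state budget must pay for.
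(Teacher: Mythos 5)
Your construction is correct, but it is genuinely different from the one in the paper. The paper builds a \emph{lockstep} counter: all $n$ modules share the whole alphabet $\Gamma\cup\Sigma$ with $\Gamma=\{a_1,\dots,a_n\}$, each increment is a globally synchronized two-event step (the event $a_j$ names the lowest zero bit, sets it to $1$ and resets all lower bits, and is followed by one arbitrary symbol of $\Sigma$), and correctness is proved by induction on $n$ via the factorization $L_{n+1}=L_n\cdot a_{n+1}\sigma\cdot L_n$. You instead build a \emph{ripple-carry} counter: $\Sigma$ is private to $A_1$, and each auxiliary event $c_i$ is shared only between the adjacent modules $A_i$ and $A_{i+1}$. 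Your version is leaner (fewer states per module, only pairwise-shared auxiliary events), but it pays for this with asynchrony: after $c_1$ fires, $A_1$ may accept a new observable symbol while a higher carry is still pending, so the configuration is not always a clean binary numeral between reads. Your phrase ``carries can be neither skipped nor discarded'' glosses over this; to make soundness airtight you should state the invariant explicitly, e.g.\ assign weight $2^{i-1}$ to state $1$ of $A_i$ and weight $2\cdot 2^{i-1}$ to its carry-pending state, and check that each $\sigma$ increases the total weight by exactly $1$ while each $c_i$ preserves it, so the all-ones marked configuration (weight $2^n-1$, no pending carries) is reachable only after exactly $2^n-1$ observable symbols regardless of interleaving. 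With that invariant spelled out, both inclusions follow as you describe, and your automata fit well within the six-state budget. One further remark: the paper's lockstep variant has the incidental benefit that the accepted language is exactly $(\Gamma\Sigma)^{2^n-1}$ with a unique $\Gamma$-projection, a rigid shape that is convenient when the lemma is later embedded into languages such as $\#\cdot L_m(A_i)\cdot\#\cdot\Delta^*$; your variant proves the lemma as stated equally well, but the number of unobservable events between consecutive observable ones varies along a run, which is worth keeping in mind if you reuse the gadget downstream.
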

  \begin{proof}
    Let $\Gamma=\{a_1,a_2,\ldots,a_{n}\}$. For $i=1,\ldots,n$, we define the automaton
    $
      A_i=(\{0,1,p,q,r,s\},\Gamma,\delta_i,0,\{1\})
    $
    where
    \begin{align*}
      \delta_i
        & = \{(0,a_j,p), (p,b,0), (1,a_j,q), (q,b,1) \mid j < i,\, b\in\Sigma\} \\
        & \cup \{(0,a_i,r), (r,b,1) \mid b\in\Sigma \}  \\
        & \cup \{(1,a_j,s), (s,b,0) \mid j > i,\, b\in\Sigma\}\,.
    \end{align*}
    See Example~\ref{ex1} for an illustration.

    The idea of the construction is that the parallel composition counts from $0$ to $2^n-1$ in binary. The initial state of the composition is the state $(0,0,\ldots,0)$ representing number $0$ in binary, which is modified step by step to the state $(1,1,\ldots,1)$ representing $2^n-1$ in binary. Every odd transition under an event from $\Gamma$ is used to count the number of steps, and every even transition under an event from $\Sigma$ is to give the required language  $\Sigma^{2^n-1}$ in the projection to $\Sigma$.

    We now prove by induction on $n\ge 1$ that the parallel composition $A_n\|\cdots\|A_1$ accepts a language $L_n \subseteq (\Gamma\Sigma)^{2^n-1}$ such that the projection of $L_n$ to $\Gamma$ is a singleton.

    For $n=1$, we have a single automaton $A_1$ recognizing the language $L_1=\{a_1b \mid b\in \Sigma\}$, and hence the claim holds.

    Assume that the claim holds for $n$. We now prove it for $n+1$. In this case, $\Gamma=\{a_1,\ldots,a_{n+1}\}$. Let $w\in L_n$. Then $w$ does not contain event $a_{n+1}$. Notice that the automaton $A_{n+1}$ cycles between states $0$ and $r$ when reading $w$. Thus, in the composition $A_{n+1}\|A_n\|\cdots\|A_1$, we have
    \begin{align*}
      (0,0,\ldots,0) & \xrightarrow{w} (0,1,\ldots,1) \\
                     & \xrightarrow{w'} (1,0,\ldots,0)\\
                     & \xrightarrow{w} (1,1,\ldots,1)
    \end{align*}
    where $w'=a_{n+1}b$ for some $b\in \Sigma$, that is, the projection of $w'$ to $\Gamma$ is the singleton $\{a_{n+1}\}$. The parallel composition therefore accepts the word $ww'w$, which is of the form $(\Gamma\Sigma)^{2^{n+1}-1}$, and where the projection of $ww'w$ to $\Gamma$ is a unique word by the induction hypothesis.
  \end{proof}

  \begin{exmp}\label{ex1}
    We demonstrate the construction for $n=3$. The automaton $A_3 \| A_2 \| A_1$ is shown in Fig.~\ref{fig1}. The reason to write the automata from right to left is that the states of their parallel composition then correspond to the binary representation of numbers, see below. It can be verified that $P(L_m(\|_{i=1}^{n} A_i)) = \Sigma^{7}$. This is because the sequence of states from $(0,0,0)$ to the accepting configuration $(1,1,1)$ is
    \begin{align*}
      (0,0,0) & \xrightarrow{a_1} (p,p,r) \xrightarrow{\Sigma} (0,0,1) \\
              & \xrightarrow{a_2} (p,r,s) \xrightarrow{\Sigma} (0,1,0) \\
              & \xrightarrow{a_1} (p,q,r) \xrightarrow{\Sigma} (0,1,1) \\
              & \xrightarrow{a_3} (r,s,s) \xrightarrow{\Sigma} (1,0,0) \\
              & \xrightarrow{a_1} (q,p,r) \xrightarrow{\Sigma} (1,0,1) \\
              & \xrightarrow{a_2} (q,r,s) \xrightarrow{\Sigma} (1,1,0) \\
              & \xrightarrow{a_1} (q,q,r) \xrightarrow{\Sigma} (1,1,1)
    \end{align*}
    Here $\xrightarrow{\Sigma}$ denotes a transition $\xrightarrow{\sigma}$ for some $\sigma\in\Sigma$.\qed
  \end{exmp}

  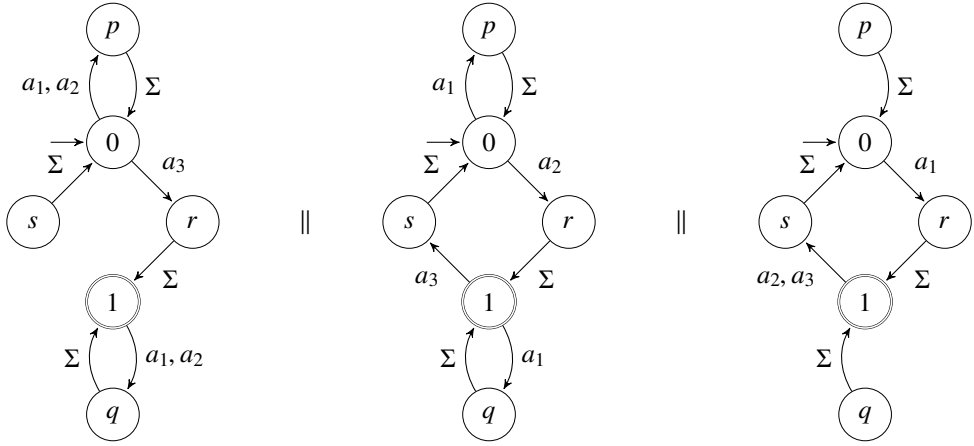
\begin{figure*}
    \centering
      \begin{tikzpicture}[>=stealth',auto,baseline,->,shorten >=1pt,node distance=1.5cm, state/.style={ellipse,minimum size=7mm,very thin,draw=black,initial text=}]
      \node[state,initial]    (10) {$0$};
      \node[state]            (1p) [above of=10] {$p$};
      \node[state]            (1r) [below right of=10] {$r$};
      \node[state,accepting]  (11) [below left of=1r] {$1$};
      \node[state]            (1q) [below of=11] {$q$};
      \node[state]            (1s) [below left of=10] {$s$};

      \node[state,initial]    (20) [left of=10,node distance=5cm] {$0$};
      \node[state]            (2p) [above of=20] {$p$};
      \node[state]            (2r) [below right of=20] {$r$};
      \node[state,accepting]  (21) [below left of=2r] {$1$};
      \node[state]            (2q) [below of=21] {$q$};
      \node[state]            (2s) [below left of=20] {$s$};

      \node[state,initial]    (30) [left of=20,node distance=5cm] {$0$};
      \node[state]            (3p) [above of=30] {$p$};
      \node[state]            (3r) [below right of=30] {$r$};
      \node[state,accepting]  (31) [below left of=3r] {$1$};
      \node[state]            (3q) [below of=31] {$q$};
      \node[state]            (3s) [below left of=30] {$s$};

      \node[] (x) [right of=3r] {$\|$};
      \node[] (x) [right of=2r] {$\|$};

      \path
        (10) edge node {$a_1$} (1r)
        (11) edge node {$a_2,a_3$} (1s)

        (20) edge[bend left] node {$a_1$} (2p)
        (21) edge[bend left] node {$a_1$} (2q)
        (20) edge node {$a_2$} (2r)
        (21) edge node {$a_3$} (2s)

        (30) edge[bend left] node {$a_1,a_2$} (3p)
        (31) edge[bend left] node {$a_1,a_2$} (3q)
        (30) edge node {$a_3$} (3r)

        (1r) edge node {$\Sigma$} (11)
        (2r) edge node {$\Sigma$} (21)
        (3r) edge node {$\Sigma$} (31)
        (1s) edge node {$\Sigma$} (10)
        (2s) edge node {$\Sigma$} (20)
        (3s) edge node {$\Sigma$} (30)
        (1p) edge[bend left] node {$\Sigma$} (10)
        (1q) edge[bend left] node {$\Sigma$} (11)
        (2p) edge[bend left] node {$\Sigma$} (20)
        (2q) edge[bend left] node {$\Sigma$} (21)
        (3p) edge[bend left] node {$\Sigma$} (30)
        (3q) edge[bend left] node {$\Sigma$} (31)
        ;
      \end{tikzpicture}
      \caption{Automaton for $A_3 \| A_2 \| A_1$}
      \label{fig1}
  \end{figure*}

  Lemma~\ref{lem1} can be modified to obtain the language $\Sigma^{2^n-k}$, for $0< k \le 2^n$. The idea is to start in the state representing number $k-1$ in binary. For instance, in Example~\ref{ex1}, we can obtain $\Sigma^{8-4}$ by starting from state $(0,1,1)$ that represents number $3$ in binary in the parallel composition $A_3 \| A_2 \| A_1$.

  We can now prove the main theorem of this section.
  \begin{thm}\label{thm1}
    Deciding weak (periodic) modular detectability is an EXPSPACE-complete problem.
  \end{thm}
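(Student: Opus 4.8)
\emph{Plan for membership.} My plan for the upper bound is to reduce the modular question to the monolithic one at the cost of an exponential blow-up. The monolithic system $\|_{i=1}^{n}G_i$ has its state set inside $Q_1\times\cdots\times Q_n$, so its size is at most exponential in the size of the input. Since deciding weak detectability for a single DES is in PSPACE (and the periodic variant is analogous), executing that algorithm on the product uses space polynomial in the product size, that is, exponential in the input, and hence lands in EXPSPACE. Equivalently, and without forming the product explicitly, one guesses a witnessing trajectory step by step while maintaining the current observer state, a subset of $Q_1\times\cdots\times Q_n$ that is storable in exponential space; the resulting guess-and-check procedure runs in nondeterministic exponential space, which equals EXPSPACE by Savitch's theorem. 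The same argument covers both the current-state and the periodic variant.

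\emph{Plan for hardness.} For the lower bound I would reduce an EXPSPACE-complete problem, concretely the acceptance of a deterministic Turing machine $M$ on an input $w$ within space $2^{|w|}$, to weak modular detectability. The aim is to trade the exponential space resource for a polynomial family of polynomial-size modules, and the engine that makes this possible is Lemma~\ref{lem1}: its $n$ six-state automata generate, under the projection onto $\Sigma$, exactly the words of length $2^{n}-1$ (with the stated refinement to $\Sigma^{2^{n}-k}$ fixing the precise length). I would use this as a binary address counter that sweeps the $2^{|w|}$ cells of a configuration of $M$, so that one sweep of the composition emits a single configuration, encoded as an exponentially long $\Sigma$-word, while only polynomially many modules of polynomial size are written down.

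\emph{The detectability gadget.} On top of the counter I would realize the standard encoding of a universal check by observer ambiguity. The construction keeps, in every reachable observer state, a distinguished honest state that traces a guessed run of $M$ together with a family of error-witness states that shadow the honest one under the projection, so that $|R_G(\cdot)|\ge 2$ and the system looks undetectable. Using the counter of Lemma~\ref{lem1}, a single polynomial-size module lets an error-witness range over all $2^{|w|}$ cell positions and perform the usual local window test between consecutive configurations; a witness is removed from the estimate exactly when the observed word rules out its particular inconsistency. The estimate therefore collapses to a singleton, and remains one, precisely when the observed word is consistent with a complete valid computation of $M$ reaching an accepting configuration, and such a word is producible if and only if $M$ accepts $w$. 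Hence $M$ accepts $w$ if and only if some infinite trajectory of $G$ eventually yields singleton estimates, that is, if and only if $G$ is weakly (periodically) detectable. The modules are patched so that deadlock-freeness and the absence of purely unobservable loops hold, which is routine given the clock structure of Lemma~\ref{lem1}.

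\emph{Main obstacle.} The delicate part is the faithful encoding: certifying an exponentially long computation with only polynomially many polynomial-size modules and, at the same time, wiring this certification into the detectability semantics so that the ambiguity of the state estimate persists exactly until acceptance is witnessed and not a moment before. The counter of Lemma~\ref{lem1} supplies the cell addressing, but keeping the honest run and all error-witnesses observationally identical across an entire exponential sweep, letting them separate only on a certified accepting computation, and doing all this while every module stays polynomial and the whole reduction remains polynomial-time computable, is where the real work lies. Finally, I would verify that the very same gadget, after relabeling the distinguished states as a secret (for opacity) or the initial branching as a fault (for A-diagnosability), delivers the remaining two hardness results, in line with the paper's announced uniform treatment.
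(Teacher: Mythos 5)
Your proposal follows essentially the same route as the paper: EXPSPACE membership by running the nondeterministic observer search on the exponential product $\|_{i=1}^{n}G_i$ (with Savitch's theorem closing the nondeterminism gap), and EXPSPACE-hardness by reducing acceptance of a $2^{n}$-space-bounded Turing machine, using the counter of Lemma~\ref{lem1} to address exponentially long configurations and tying ambiguity of the current-state estimate to the presence of an encoding error. The ``real work'' you defer is exactly what the paper supplies: the Meyer--Stockmeyer regular expression for words that fail to encode an accepting run is split into polynomially many error-detecting modules built from Lemma~\ref{lem1}, and a $\diamond$-triggered reset gadget (states $q_{s_i}$ versus $q_{f_i}$) makes the estimate after $w\diamond$ a singleton precisely when no module accepts $w$, i.e., when $w$ encodes an accepting computation.
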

  \begin{proof}
    To check weak modular detectability in exponential space, we can compute the overall system $G = \|_{i=1}^{n} G_i$ and use the method of Shu and Lin~\cite{ShuLin2011} computing the observer of $G$ by storing only the current state of the observer and using the nondeterministic search strategy. The states of the observer of $G$ are subsets of $n$-tuples, and hence every state can be of size up to $k^n$, where $k$ is the maximum number of states among all $G_i$. Using the nondeterministic search method then shows that the problem is in EXPSPACE.

    To show that deciding weak modular detectability is EXPSPACE-hard, we consider a $2^n$-space-bounded Turing machine $\M = (Q,T,I,\delta,b,q_o,q_a)$ with $b\in T$ being a blank symbol, and an input word $x=x_1x_2\cdots x_n$.
    We construct, in polynomial time, a sequence of $m$ systems $G_i$, some of which are a parallel composition of other systems, such that $\|_{i=1}^{m} G_i$ is weakly (periodically) detectable if and only if $\M$ accepts $x$.

    We first describe how to encode a computation of $\M$ on $x$. A configuration of $\M$ on $x$ consists of a current state $q\in Q$, the position $1\leq \ell\leq 2^n$ of the head, and the tape contents $\theta_1,\ldots,\theta_{2^n}$ with $\theta_i\in T$. We represent it by a sequence
    \[
      \theta_1 \cdots \theta_{\ell-1} (q,\theta_{\ell}) \theta_{\ell+1} \cdots \theta_{2^n}
    \]
    of symbols from $T \cup Q \times T$. A run of $\M$ on $x$ is represented as a word
    $\# w_1 \# w_2 \# \cdots \allowbreak \# w_m \#$, where $w_i\in (T \cup Q \times T)^{2^n}$ and
    $\#\notin T \cup Q \times T$ is a fresh separator symbol.

    Meyer and Stockmeyer~\cite{MeyerS72} show that there is a regular expression $E$ recognizing all words over $\Delta=\{\#\}\cup T \cup Q\times T$ that do not correctly encode an accepting run of $\M$ on $x$, that is,
    \begin{itemize}
      \item if $x\notin L(\M)$, then $L(E)=\Delta^*$,
      \item if $x\in L(\M)$, then $L(E)\neq\Delta^*$,
    \end{itemize}
    where $L(\M)$ denotes the set of inputs accepted by $\M$.

    The regular expression is constructed in the following steps:
    \begin{description}
      \item[(A)] Words that do not begin with the initial configuration, that is, $\#w_1\#$ is not of the form $\#(q_o,x_1) x_2 \cdots x_n b^{2^n-n}\#$:
        \begin{align}
          & ( (\Delta\setminus \#)
            \cup
            \# \cdot ((\Delta\setminus (q_o,x_1))
              \cup (q_o,x_1) \cdot ((\Delta \setminus x_2) \nonumber\\
          &     \cup x_2 \cdot ((\Delta \setminus x_3)
                  \cup \cdots (\Delta\setminus x_n))) \cdots ) \cdot \Delta^* \\
          & \cup
            \Delta^{n+1} \cdot b^* \cdot (\Delta \setminus \{b,\#\}) \cdot \Delta^* \\
          & \cup
            \# \cdot (\Delta\cup\eps)^{2^n-1} \cdot \# \cdot \Delta^* \\
          & \cup
            \# \cdot \Delta^{2^n} \cdot (\Delta \setminus \#) \cdot \Delta^*
        \end{align}

      \item[(B)] Words that do not contain the accepting state $q_a$:
        \[
          (\Delta \setminus ( \cup_{t\in T} (q_a,t)))^*
        \]

      \item[(C)] Words that are not of the form $\# w_1 \# w_2 \# \cdots \# w_k \#$, where $w_{i+1}$ is the configuration obtained from the configuration $w_i$ by the rules of $\M$:
        \[
          \bigcup_{c_1,c_2,c_3 \in \Delta} \Delta^* \cdot c_1c_2c_3 \cdot \Delta^{2^n-1} \cdot (\Delta\setminus N(c_1,c_2,c_3)) \cdot \Delta^*
        \]
        where $N(c_1,c_2,c_3) \subseteq \Delta$ is the set of symbols that could legally occupy the $j$-th symbol of the next configuration given that the $(j-1)$, $j$, and $(j+1)$-th symbols of the current configuration are $c_1,c_2,c_3$, respectively. Also $N(c_1,\#,c_3)=\{\#\}$ for every $c_1,c_3\in\Delta$.
    \end{description}

    The regular expression consists of several unions of structurally simpler regular expressions. Most of these regular expressions can easily be translated to an NFA in polynomial time by a direct transformation. For instance, an NFA for the expression \textbf{A}(2) is depicted in Fig.~\ref{fig2}.
    \begin{figure*}
      \centering
        \begin{tikzpicture}[>=stealth',auto,baseline,->,shorten >=1pt,node distance=2cm, state/.style={ellipse,minimum size=7mm,very thin,draw=black,initial text=}]
        \node[state,initial]    (0) {$0$};
        \node[state]            (1) [right of=0] {$1$};
        \node                   (2) [right of=1] {$\cdots$};
        \node[state]            (3) [right of=2] {$n+1$};
        \node[state,accepting]  (4) [right of=3,node distance=3.5cm] {$n+2$};

        \path
          (0) edge node {$\Delta$} (1)
          (1) edge node {$\Delta$} (2)
          (2) edge node {$\Delta$} (3)
          (3) edge[loop above] node {$b$} (3)
          (3) edge node {$\Delta\setminus\{b,\#\}$} (4)
          (4) edge[loop above] node {$\Delta$} (4)
          ;
        \end{tikzpicture}
        \caption{An NFA for the expression $\Delta^{n+1} \cdot b^* \cdot (\Delta \setminus \{b,\#\}) \cdot \Delta^*$}
        \label{fig2}
    \end{figure*}
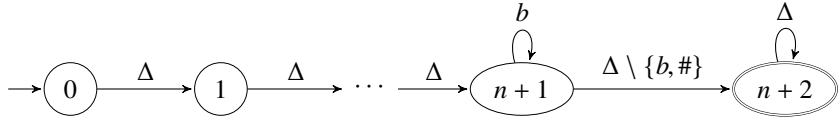
    The construction of an NFA is, however, not easy for those regular expressions that contain parts of the form $\Delta^{2^n}$, because such a direct translation would require $2^n$ transitions labeled with all symbols of $\Delta$, which cannot be done in polynomial time.

    For every part of the regular expression $E$, we construct an automaton $G_i$ as follows.

    For \textbf{A}(1), an NFA $G_1$ can directly be constructed, as well as $G_2$ for \textbf{A}(2), cf. Fig.~\ref{fig2}.

    For \textbf{A}(3) and \textbf{A}(4), we make use of Lemma~\ref{lem1}. Namely, each of $G_3$ and $G_4$ is a parallel composition of $n$ automata, where the private alphabets are different. It means that we apply Lemma~\ref{lem1} always with a fresh alphabet $\Gamma$. Then, the alphabets of $G_3$ and $G_4$ are $\Delta\cup\Gamma_3$ and $\Delta\cup\Gamma_4$, where $\Gamma_3\cap\Gamma_4=\emptyset$. Notice that Lemma~\ref{lem1} can be applied to \textbf{A}(4), because $\Delta^{2^n} = \Delta \cdot \Delta^{2^n-1}$. Thus, $G_3$ for \textbf{A}(3) is a projection to $\Delta$ of the parallel composition of $n$ automata $B_n^3\|\cdots\|B_1^3$; similarly for $G_4$ for the regular expression \textbf{A}(4).

    In more detail, the language of $B_i^3$ is $\#\cdot L_m(A_i)\cdot \#\cdot \Delta^*$, where $A_i$ is as in Lemma~\ref{lem1} over the alphabet $\Sigma\cup\Gamma_3$, where $\Sigma=\Delta\cup\{e\}$ with $e$ being a new unobservable event. Then, by Lemma~\ref{lem1}, $P(L_m(B_n^3\|\cdots\|B_1^3))=\#\cdot (\Delta\cup\{\varepsilon\})^{2^n-1}\cdot \#\cdot \Delta^*$.

    For \textbf{B}, $G_5$ can easily be constructed.

    For \textbf{C}, we construct $|\Delta|^3$ automata $G_{c_1c_2c_3}$. Namely, for every $c_1c_2c_3$, we construct $G_{c_1c_2c_3}$ recognizing
    \[
      \Delta^* \cdot c_1c_2c_3 \cdot \Delta^{2^n-1} \cdot (\Delta\setminus N(c_1,c_2,c_3)) \cdot \Delta^*\,.
    \]
    Again, $G_{c_1c_2c_3}$ is a parallel composition of $n$ automata $C_i$ obtained from automata $A_i$ of Lemma~\ref{lem1} as follows. Every $C_i$ has a prefix recognizing $\Delta^* \cdot c_1c_2c_3$, and a suffix recognizing $(\Delta\setminus N(c_1,c_2,c_3)) \cdot \Delta^*$. Thus, the language of $C_i$ is
    \[
      \Delta^* \cdot c_1c_2c_3 \cdot L_m(A_i) \cdot (\Delta\setminus N(c_1,c_2,c_3)) \cdot \Delta^*\,.
    \]
    Then $L_m(G_{c_1c_2c_3}) = L_m(\|_{i=1}^{n} C_i)$.
    Let $P$ be a projection from the overall alphabet to $\Delta$. Then $P(L_m(G_{c_1c_2c_3})) = \Delta^* \cdot c_1c_2c_3 \cdot \Delta^{2^n-1} \cdot (\Delta\setminus N(c_1,c_2,c_3)) \cdot \Delta^*$.

    We now denote all the automata constructed above as $G_i$, for $i=1,\ldots,m$, where $m=|\Delta|^3+5$ is polynomial, and every $G_i$ was constructed in polynomial time. Then, by construction, we have that
    \[
      L(E)=\bigcup_{i=1}^{m} P(L_m(G_i))\,.
    \]

    In the above constructions, we use total automata (every automaton can be made total by adding a single state and the missing transitions), that is, in every state a transition under every event is defined. Then every $G_i$ is also total.

    Let $q_{s_i}$ and $q_{f_i}$, $i=1,\ldots,m$, be new non-marked states with self-loops under all events from $\Delta\cup\{\diamond\}$, where $\diamond$ is a new observable event.

    For every $i$, we modify $G_i$ by adding $q_{s_i}$ to the set of initial states, and by adding a transition from every state under the event $\diamond$ to state $q_{s_i}$, and from every marked state to state $q_{f_i}$, cf. Fig.~\ref{fig3}.
    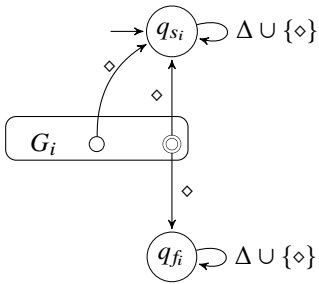
\begin{figure}[ht]
      \centering
      \begin{tikzpicture}[>=stealth',->,auto,shorten >=1pt,node distance=3cm,double distance=1pt,
        state/.style={circle,minimum size=0mm,inner sep=2pt,very thin,draw=black,initial text=}]

        \node [state,accepting]   (a1)  {};
        \node [state]             (d1) [left of=a1,node distance=1cm] {};
        \node []                  (a2) [left of=d1,node distance=.7cm]  {$G_i$};
        \node [state]             (2)  [below of=a1,node distance=1.5cm] {$q_{f_i}$};
        \node [state,initial]     (3)  [above of=a1,node distance=1.5cm] {$q_{s_i}$};

        \path
          (a1) edge node {$\diamond$} (2)
          (2)  edge[loop right] node {$\Delta\cup\{\diamond\}$} (2)
          (a1) edge node {$\diamond$} (3)
          (d1) edge[bend left] node[above,pos=.5] {$\diamond$} (3)
          (3)  edge[loop right] node {$\Delta\cup\{\diamond\}$} (3)
        ;

        \begin{pgfonlayer}{background}
          \path (a1.south  -| a2.west)+(-0.2,-0.1) node (a) {};
          \path (a2.north  -| a1.east)+(0.1,0.1)   node (b) {};
          \path[rounded corners, draw=black] (a) rectangle (b);
        \end{pgfonlayer}
      \end{tikzpicture}
      \caption{An illustration of the modification of $G_i$}
      \label{fig3}
    \end{figure}

    If $G_i$ is of the form $\|_{j=1}^{n} B_j^i$, we do this as follows. A transition under $\diamond$ to $b_{s_j}^i$ is added to every state of $B_j^i$, where $b_{s_j}^i$ is a new initial state of $B_j^i$, and for every marked state $t_j^i$ of $B_j^i$, we add the transitions $(t_j^i,\diamond_i,{t_j^i}')$ and $({t_j^i}', \diamond, b_{f_j}^i)$, where ${t_j^i}'$ and $b_{f_j}^i$ are new states added to $B_j^i$. Then, $q_{s_i}=(b_{s_1}^i,\ldots,b_{s_n}^i)$, $q_{f_i}=(b_{f_1}^i,\ldots,b_{f_n}^i)$, and for every state of $G_i$ of the form $(r_1,r_2,\ldots,r_n)$, a transition under $\diamond_i$ is defined in $G_i$ if and only if $(r_1,r_2,\ldots,r_n)$ is marked in $G_i$, that is, every $r_j$ is marked in $B_j^i$, cf. Fig.~\ref{fig4} for an illustration.
    \begin{figure}[ht]
      \centering
      \begin{tikzpicture}[>=stealth',->,auto,shorten >=1pt,node distance=3cm,double distance=1pt,
        state/.style={circle,minimum size=0mm,inner sep=2pt,very thin,draw=black,initial text=}]

        \node [state,accepting]   (a1)  {};
        \node [state]             (a3) [left of=a1,node distance=1cm] {};
        \node []                  (a2) [left of=a3,node distance=.7cm]  {$B_1^i$};
        \node [state]             (2a) [below of=a1,node distance=1.3cm] {};
        \node [state]             (2)  [below of=2a,node distance=1.3cm] {$b_{f_1}^i$};
        \node [state,initial]     (3)  [above of=a1,node distance=2cm] {$b_{s_1}^i$};

        \node [state,accepting]   (x1) [right of=a1,node distance=3cm] {};
        \node [state]             (x3) [right of=x1,node distance=1cm] {};
        \node []                  (x2) [right of=x3,node distance=.7cm]  {$B_2^i$};
        \node [state]             (z2a) [below of=x1,node distance=1.3cm] {};
        \node [state]             (z2) [below of=z2a,node distance=1.3cm] {$b_{f_2}^i$};
        \node [state,initial right]     (z3) [above of=x1,node distance=2cm] {$b_{s_2}^i$};

        \path
          (a1) edge node[left] {$\diamond_i$} (2a)
          (2a) edge node[left] {$\diamond$} (2)
          (2)  edge[loop left] node {$\Delta\cup\{\diamond\}$} (2)
          (a1) edge node {$\diamond$} (3)
          (a3) edge[bend left] node {$\diamond$} (3)
          (3)  edge[loop above] node {$\Delta\cup\{\diamond\}$} (3)
          (x1) edge node {$\diamond_i$} (z2a)
          (z2a) edge node {$\diamond$} (z2)
          (z2)  edge[loop right] node {$\Delta\cup\{\diamond\}$} (z2)
          (x1) edge node[right] {$\diamond$} (z3)
          (x3) edge[bend right] node[right] {$\diamond$} (z3)
          (z3)  edge[loop above] node {$\Delta\cup\{\diamond\}$} (z3)
          ;

        \begin{pgfonlayer}{background}
          \path (a2.south  -| a2.west)  node (a) {};
          \path (a2.north  -| a1.east)+(0.1,0)  node (b) {};
          \path (x2.south  -| x1.west)+(-0.1,0)  node (x) {};
          \path (x2.north  -| x2.east)  node (y) {};
          \path[rounded corners, draw=black] (a) rectangle (b);
          \path[rounded corners, draw=black] (x) rectangle (y);
        \end{pgfonlayer}
      \end{tikzpicture}
      \caption{An illustration of the modification of $G_i=B_1^i \parallel B_2^i$}
      \label{fig4}
    \end{figure}
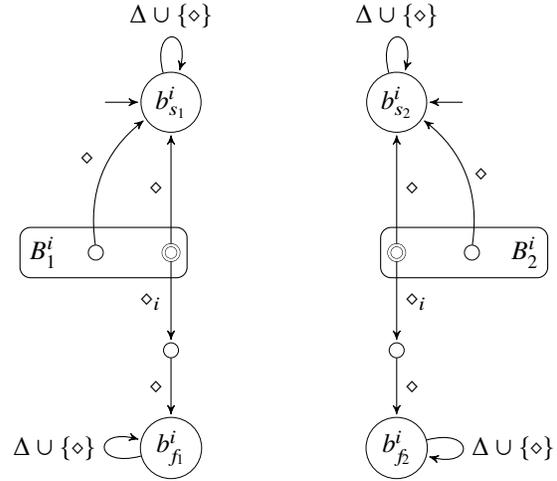

    In other words, every $G_i$ is always in state $q_{s_i}$. If a reachable state $(r_1,r_2,\ldots,r_n)$ of $G_i$ contains a non-marked state $r_j$ of $B_j^i$, then $(r_1,r_2,\ldots,r_n)$ leads only to state $q_{s_i}$ under $\diamond$ in $P(G_i)$. However, if $(r_1,r_2,\ldots,r_n)$ consists only of marked states of $B_1^i,\ldots,B_n^i$, then $(r_1,r_2,\ldots,r_n)$ leads to state $q_{s_i}$ under $\diamond$ and to state $(r_1',r_2',\ldots,r_n')$ under $\diamond_i$ in $G_i$, and hence to states $q_{s_i}$ and $q_{f_i}$ in $P(G_i)$.

    Summarized, if $w \in P(L_m(G_i))$ and $(r_1,r_2,\ldots,r_n)$ is a marked state reached by $w$, then
    \[
      (r_1,r_2,\ldots,r_n)\xrightarrow{\diamond} \{q_{s_i},q_{f_i}\}\,.
    \]
    If $w\notin P(L_m(G_i))$, then any state $(r_1,r_2,\ldots,r_n)$ reachable by $w$ is such that
    \[
      (r_1,r_2,\ldots,r_n)\xrightarrow{\diamond} \{q_{s_i}\}\,.
    \]

    We now consider the parallel composition $\|_{i=1}^{m} G_i$. Assume that the alphabet of $\|_{i=1}^{m} G_i$ is $\Sigma$. We show that $\|_{i=1}^{m} G_i$ is weakly (periodically) detectable with respect to $\Sigma \setminus (\Delta\cup\{\diamond\})$ if and only if $\M$ accepts $x$. In other words, if and only if $L(E)\neq\Delta^*$.

    Notice that $\|_{i=1}^{m} G_i$ is weakly (periodically) detectable with respect to $\Sigma \setminus (\Delta\cup\{\diamond\})$ if and only if $P(\|_{i=1}^{m} G_i)$ is weakly (periodically) detectable with respect to $\emptyset$.

    Assume first that $\M$ does not accept $x$. Then, for every $w\in\Delta^*$, $w$ is not an encoding of an accepting computation of $\M$ on $x$. Therefore, there exists $i\in\{1,\ldots,m\}$ such that $w\in P(L_m(G_i))$. This means that there is $w'\in P^{-1}(w)$ such that the composition $\|_{i=1}^{m} G_i$ is in a state $(x_1,x_2,\ldots,x_m)$ after reading $w'$, where $x_i$ is a marked state of $G_i$. This means that the system $P(\|_{i=1}^{m} G_i)$ after reading $w\diamond$ is in at least two states, namely $\bar q_s = (q_{s_1},\ldots,q_{s_i},\ldots,q_{s_m})$ and $(q_{s_1},\ldots,q_{s_{i-1}},q_{f_i},q_{s_{i+1}},\ldots,q_{s_m})$, and hence the system is not weakly (periodically) detectable.

    On the other hand, if $\M$ accepts $x$, then there is a word $w$ that encodes the accepting computation of $\M$ on $x$. Therefore, $w\notin L(E) = \bigcup_{i=1}^{m} P(L_m(G_i))$, and hence none of $G_i$ is in a marked state after reading any $w'\in P^{-1}(w)$. Since event $\diamond$ leads every non-marked state of $G_i$ only to state $q_{s_i}$, we have that $w\diamond$ leads $P(\|_{i=1}^{m} G_i)$ only to state $\bar q_s = (q_{s_1},\ldots,q_{s_m})$. Thus, $P(\|_{i=1}^{m} G_i)$ is weakly (periodically) detectable.
  \end{proof}

  In some cases, the projection erases only private events.
  This, for instance, ensures that the projection commutes with parallel composition, that is, $P(L_m(G_1\|G_2)) = P(L_m(G_1)) \parallel P(L_m(G_2))$. We now show that under this assumption, deciding weak modular detectability is a simpler problem.

  \begin{thm}\label{thm2}
    Let $\{G_1,G_2,\ldots,G_n\}$ be a set of discrete event systems and $P\colon\Sigma \to \Sigma_{o}$ be a projection such that all shared events of any two systems are included in $\Sigma_{o}$. Then deciding weak (periodic) modular detectability is PSPACE-complete.
  \end{thm}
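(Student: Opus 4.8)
The plan is to prove PSPACE-completeness by treating the two bounds separately, with the upper bound being the substantive part.

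\emph{Lower bound.} PSPACE-hardness is immediate from the monolithic case: a single system $G_1$ has no shared events, so the hypothesis of the theorem holds vacuously, and deciding weak (periodic) detectability for one (monolithic) system is already PSPACE-hard~\cite{Zhang17,Masopust2017Detectability}. Taking $n=1$ therefore gives a trivial reduction, and it remains to establish membership in PSPACE.

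\emph{Decomposition of the state estimate.} The essential structural fact I would establish is that, under the hypothesis, the current-state estimate of $G=\|_{i=1}^{n} G_i$ factorizes as a product of the local estimates. Writing $P_i$ for the projection onto $\Sigma_o\cap\Sigma_i$ and $R_{G_i}$ for the estimate in $G_i$, I claim that for every observation $w\in\Sigma_o^*$,
\[
  R_G(w)=R_{G_1}(P_1(w))\times\cdots\times R_{G_n}(P_n(w)).
\]
The inclusion $\subseteq$ is routine: a global run consistent with $w$ projects onto local runs consistent with the $P_i(w)$. The inclusion $\supseteq$ is where the hypothesis is used: given local runs in the $G_i$ whose observable projections all coincide with the corresponding restriction of $w$, they can be assembled into a single global run projecting to $w$, because every shared (synchronizing) event is observable and hence its position is pinned down by $w$, while the unobservable events are all private and impose no cross-module ordering constraints and can thus be freely interleaved. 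Consequently $|R_G(w)|=\prod_{i}|R_{G_i}(P_i(w))|$, and in particular $|R_G(w)|=1$ if and only if every local estimate $R_{G_i}(P_i(w))$ is a singleton.

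\emph{Membership in PSPACE.} By the decomposition, the reachable part of the observer of $G$ is isomorphic to the reachable part of the synchronous product $\|_{i=1}^{n}\mathrm{Obs}(G_i)$ of the local observers, where $\mathrm{Obs}(G_i)$ is the determinization of $P(G_i)$. A state of this product is an $n$-tuple $(X_1,\ldots,X_n)$ with $X_i\subseteq Q_i$, which can be stored in $\sum_i|Q_i|$ bits, that is, in polynomial space, and its successor under an observable event is computed locally in polynomial time: each module whose alphabet contains the event performs one subset-construction step with $\eps$-closure over its private unobservable events, and the others stay put. I would then invoke the standard observer characterization of weak detectability~\cite{ShuLin2011,Zhang17}: $G$ is weakly detectable if and only if its observer has a reachable cycle all of whose states are singleton estimates, and weakly periodically detectable if and only if it has a reachable cycle containing at least one singleton estimate. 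By the decomposition, ``singleton estimate'' means a tuple all of whose components are singletons, a condition checkable in polynomial time. A nondeterministic algorithm can therefore guess the path reaching such a cycle and then traverse the cycle, keeping in memory only the current tuple and the guessed entry point; this runs in polynomial space, so the problem lies in NPSPACE $=$ PSPACE.

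\emph{Main obstacle.} The crux is the $\supseteq$ direction of the decomposition, where the hypothesis that all shared events are observable must be used to realign independently chosen local runs into one global run; without it the product of local estimates only over-approximates $R_G(w)$ and the argument collapses, which is exactly what makes the unrestricted problem EXPSPACE-hard in Theorem~\ref{thm1}. A secondary point requiring care is that every cycle in the product observer is realized by a genuine infinite trajectory of $G$, which follows from deadlock-freeness and the absence of unobservable loops; once this is settled the cycle characterization applies verbatim and the nondeterministic search described above is routine.
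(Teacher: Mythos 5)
Your proposal is correct and follows essentially the same route as the paper: the key lemma is the factorization of the state estimate into a product of local estimates (the paper states it as ``a state is reachable in $P(\|_{i=1}^{n}G_i)$ by $P(w)$ iff it is reachable in $\|_{i=1}^{n}P(G_i)$ by $P(w)$,'' proved by exactly the same interleaving argument using observability of shared events), followed by a nondeterministic polynomial-space search over the product of local observers with the singleton-cycle characterization of weak (periodic) detectability. The only cosmetic difference is that you derive the lower bound by specializing to $n=1$, while the paper cites the known PSPACE-hardness results directly; both are fine.
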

  \begin{proof}
    We first show that $P(\|_{i=1}^{n} G_i)$ is weakly (periodically) detectable if and only if $\|_{i=1}^{n} P(G_i)$ is weakly (periodically) detectable. To this end, we show that a state is reachable in $P(\|_{i=1}^{n} G_i)$ by a word $P(w)$ if and only if it is reachable in $\|_{i=1}^{n} P(G_i)$ by $P(w)$.

    If a state $(x_1,\ldots,x_n)$ of $P(\|_{i=1}^{n} G_i)$ is reachable by a word $P(w)$, then every $x_i$ is reachable by $P(w)$ in $P(G_i)$. Therefore, state $(x_1,\ldots,x_n)$ is reachable by $P(w)$ in $\|_{i=1}^{n} P(G_i)$.

    On the other hand, if $(x_1,\ldots,x_n)$ is reachable by a word $P(w)$ in $\|_{i=1}^{n} P(G_i)$, then there are words $w_i$, $i=1,\ldots,n$, such that $P(w_i)=P(w)$, and $x_i$ is reachable by $w_i$ in $G_i$. Let $P(w)=a_1a_2\cdots a_m$, for some $m\ge 0$. Then every $w_i$ can be written as $w_i=u_{1,i} a_1 u_{2,i} \cdots u_{m-1,i} a_m u_{m,i}$, where every $u_{k,i} \in E_i^*$, $k=1,\ldots,m$, for $E_i$ denoting the private alphabet of $G_i$, that is, for any $i\neq j$, $E_i\cap E_j=\emptyset$. Then the word
    \[
      u_{1,1}u_{1,2}\cdots u_{1,n}\, a_1\, u_{2,1}\cdots u_{2,n} \cdots u_{m-1,n}\, a_m\, u_{m,1}\cdots u_{m,n}
    \]
    leads the parallel composition $\|_{i=1}^{n} G_i$ to state $(x_1,\ldots,x_n)$. Therefore, state $(x_1,\ldots,x_n)$ is reachable in $P(\|_{i=1}^{n} G_i)$ by $P(w)$.

    The fact that $P(\|_{i=1}^{n} G_i)$ is weakly (periodically) detectable if and only if $\|_{i=1}^{n} P(G_i)$ is means that instead of computing the observer of $\|_{i=1}^{n} G_i$, we can compute the parallel composition of observers of $G_i$.

    We now slightly abuse the notation and use $P(G_i)$ to denote the observer of $G_i$ with respect to $\Sigma_o$. Then we compute $\|_{i=1}^{n} P(G_i)$ on-the-fly using the nondeterministic search. The PSPACE algorithm needs to remember a state of $\|_{i=1}^{n} P(G_i)$, which consists of $n$ subsets each of size at most $k$, where $k$ is the maximum size of the state set among all $G_i$. Hence every state of $\|_{i=1}^{n} P(G_i)$ is of size at most $n\cdot k$, which is polynomial in the size of the input. To check that $\|_{i=1}^{n} P(G_i)$ is weakly detectable then means to guess a reachable state $(X_1,\ldots,X_n)$ of $\|_{i=1}^{n} P(G_i)$, where every $X_i$ is a singleton, such that the state is non-trivially reachable from itself by states consisting only of singletons~\cite{ShuLin2011}. (Similarly for weak periodic detectability.)
    Since PSPACE-hardness is known~\cite{YinLafortune17,Zhang17}, the problem is PSPACE-complete.
  \end{proof}

  Theorem~\ref{thm2} discusses a special case for which deciding weak modular detectability is easier. Consequently, if all events are observable, Theorem~\ref{thm2} implies that deciding weak modular detectability is PSPACE-complete. Therefore, unobservable shared events are essential in the proof of EXSPACE-hardness of deciding weak modular detectability.

\section{Modular Opacity}
Opacity is a property related to the privacy and security analysis of DES.
The system has a secret modeled as a set of states
and an intruder is modeled as a passive observer with limited observation.
The system is opaque if the intruder never knows for sure that the system is in a secret state,
i.e., the secret is not revealed.
We first recall the definition of current-state opacity~\cite{saboori2007notions}.

  \begin{defn}[Opacity~\cite{saboori2007notions}]
    A discrete event system $G = (Q, \Sigma, \delta, I)$ is current-state opaque with respect to $\Sigma_{uo}$ and a set of secret states $Q_S\subseteq Q$ if
      $(\forall s\in L(G)) [ R_G(s)\not\subseteq Q_S ]$.
  \end{defn}

  The modular version of the problem is defined as follows.
  \begin{defn}[Modular opacity]
    Given a set of discrete event systems $\{G_1,G_2,\ldots,G_n\}$, a set of unobservable events $\Sigma_{uo}$, and a set of secret states $Q_S$. The modular opacity problem asks whether the system $G_1\|G_2\| \cdots \| G_n$ is opaque with respect to $\Sigma_{uo}$ and $Q_S$.
  \end{defn}

  We show that deciding whether a modular system is opaque requires exponential space.
  \begin{thm}
    Deciding modular opacity is an EXPSPACE-complete problem.
  \end{thm}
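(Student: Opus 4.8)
The plan is to establish the two directions separately, mirroring the treatment of weak modular detectability in Theorem~\ref{thm1}. For membership in EXPSPACE I would argue about the negation of opacity, which is the existential statement that some observation has $R_G(s)\subseteq Q_S$. The overall system $G=\|_{i=1}^{n}G_i$ has at most $k^n$ states, where $k$ bounds the size of each $G_i$, so a current-state estimate $R_G(s)$ is a subset of an exponentially large set and can be stored as an exponential-length bit vector. I would then guess the observation one observable event at a time, updating the estimate on the fly (each update is an unobservable closure followed by an observable step inside $G$, both computable in exponential space since a transition of $G$ is obtained componentwise), and accept as soon as the estimate becomes a subset of $Q_S$. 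This is a nondeterministic exponential-space procedure; since $\mathrm{NEXPSPACE}=\mathrm{EXPSPACE}$ by Savitch and EXPSPACE is closed under complement, deciding opacity lies in EXPSPACE.

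For hardness I would reuse the polynomial-time construction from the proof of Theorem~\ref{thm1}: the automata $G_1,\dots,G_m$ encoding the parts (A), (B), (C) of the Meyer--Stockmeyer expression $E$ for a $2^n$-space-bounded machine $\M$ and input $x$, together with the added states $q_{s_i},q_{f_i}$ and the observable probe event $\diamond$, so that $L(E)=\bigcup_{i=1}^{m}P(L_m(G_i))$ and, writing $\bar q_s=(q_{s_1},\dots,q_{s_m})$, after any observation of the form $P(w)\diamond$ the estimate equals $\{\bar q_s\}$ when $w\notin L(E)$ and strictly contains $\bar q_s$ (picking up some component $q_{f_i}$) when $w\in L(E)$. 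The only change is the choice of secret: I would set $Q_S=\{\bar q_s\}$ and claim that $\|_{i=1}^{m}G_i$ is opaque with respect to $\Sigma\setminus(\Delta\cup\{\diamond\})$ and $Q_S$ if and only if $\M$ rejects $x$ (equivalently $L(E)=\Delta^*$). Since acceptance, and hence rejection, of a $2^n$-space-bounded machine is EXPSPACE-complete and the reduction is polynomial, this yields EXPSPACE-hardness.

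The heart of the argument, and the step I expect to require the most care, is verifying the equivalence over \emph{all} observations rather than only those of the form $P(w)\diamond$. The key observation is that every component can remain in its added initial state $q_{s_i}$ forever via the self-loops on $\Delta\cup\{\diamond\}$, so $\bar q_s\in R_G(s)$ for every observation $s$; consequently $R_G(s)\subseteq Q_S$ is equivalent to $R_G(s)=\{\bar q_s\}$. I would then show that this singleton occurs for some observation exactly when $\M$ accepts: if $\M$ accepts there is a good word $w\notin L(E)$ and the observation $P(w)\diamond$ forces every component into $q_{s_i}$; if $\M$ rejects then every $w$ lies in some $P(L_m(G_i))$, so any observation either still admits a ``real'' tracking state of some $G_i$ (before a probe) or admits a $q_{f_i}$-component (after a probe), giving $R_G(s)\supsetneq\{\bar q_s\}$ in every case. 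The delicate points are checking that, in the rejecting case, no observation---including those containing several occurrences of $\diamond$---ever collapses the estimate to $\{\bar q_s\}$ (using that $q_{s_i}$ and $q_{f_i}$ are trapping under observable events), and confirming that the shared observable alphabet of the modules lets one component track $w$ into $q_{f_i}$ while the others idle in their $q_{s_j}$ self-loops.
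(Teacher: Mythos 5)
Your proposal is correct and follows essentially the same route as the paper: membership via a nondeterministic on-the-fly observer construction in exponential space, and hardness by reusing the Theorem~\ref{thm1} reduction with the single secret state $\bar q_s$. The only (minor) difference is presentational: the paper phrases the equivalence as ``opaque iff not weakly detectable,'' whereas you argue opacity directly against acceptance of $\M$ and explicitly check the observations not of the form $P(w)\diamond$, which is a slightly more careful rendering of the same argument.
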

  \begin{proof}
    Checking opacity for a monolithic system is done by constructing the observer and by checking whether there is a state in the observer that is a subset of the secret states~\cite{saboori2007notions}.
    Since, in the modular setting, the observer can be constructed nondeterministically in exponential space, the problem is in EXPSPACE for modular systems.

    To prove EXPSPACE-hardness, we reuse the proof of Theorem~\ref{thm1} and show that the modular system constructed there is opaque if and only if it is not weakly detectable.
    Let $G_i$, $i=1,\ldots,m$, be the automata constructed in the proof of Theorem~\ref{thm1},
    and let $Q_S=\{\bar{q}_s\}$ be the set of secret states.

    As shown in the proof of Theorem~\ref{thm1},
    if $\|_{i=1}^m G_{i}$ is weakly detectable with respect to $\Sigma\setminus(\Delta\cup\{\diamond\})$,
    then there exists a word $w$ such that, after reading $w\diamond$,
    the observer with unobservable events $\Sigma\setminus(\Delta\cup\{\diamond\})$
    knows for sure that the modular system is in state $\bar{q}_s$.
    This means that the system is not opaque with respect to $\{\bar{q}_s\}$ and $\Sigma\setminus(\Delta\cup\{\diamond\})$, since the unique secret state is revealed.

    On the other hand, if $\|_{i=1}^mG_{i}$ is not weakly detectable with respect to $\Sigma\setminus(\Delta\cup\{\diamond\})$,
    then, after reading any word, we cannot distinguish state $\bar{q}_s$  from some other state,
    and hence the system is opaque with respect to $\{\bar{q}_s\}$ and $\Sigma\setminus(\Delta\cup\{\diamond\})$.
  \end{proof}

  A similar special case to Theorem~\ref{thm2} can be shown for modular opacity.
  \begin{thm}\label{thm3}
    Let $\{G_1,G_2,\ldots,G_n\}$ be a set of discrete event systems and $P\colon\Sigma \to \Sigma_o$ be a projection such that all shared events of any two systems are included in $\Sigma_o$. Then deciding modular opacity is PSPACE-complete.
  \end{thm}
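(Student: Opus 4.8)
The plan is to follow the template of the proof of Theorem~\ref{thm2}, exploiting that the hypothesis ``every shared event is observable'' is equivalent to saying that all unobservable events are private, which is exactly the situation in which the observer distributes over parallel composition. Writing $G=\|_{i=1}^{n}G_i$, I would first establish the structural lemma that the observer of $G$ with respect to $\Sigma_o$ coincides, on reachable states, with the parallel composition $\|_{i=1}^{n}\mathrm{Obs}(G_i)$ of the local observers of the individual systems. Concretely, the state estimate $R_G(w)$ obtained after observing $v=P(w)$ factors as a Cartesian product $X_1\times\cdots\times X_n$, where $X_i\subseteq Q_i$ is the local estimate of $G_i$ after the restriction of $v$ to the observable alphabet of $G_i$. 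The inclusion $R_G(w)\subseteq X_1\times\cdots\times X_n$ is immediate by projecting a global run onto each component; for the reverse inclusion one interleaves, between consecutive synchronized observable events, the private moves witnessing each $x_i\in X_i$, which is possible precisely because the private alphabets are pairwise disjoint, i.e.\ the same interleaving argument already used in Theorem~\ref{thm2}.

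Given this factorization, I would reduce opacity to a reachability search. By definition, $G$ is \emph{not} opaque iff some reachable state estimate $X_1\times\cdots\times X_n$ is contained in $Q_S$. A nondeterministic algorithm therefore guesses step by step a path in $\|_{i=1}^{n}\mathrm{Obs}(G_i)$, keeping in memory only the current descriptor $(X_1,\ldots,X_n)$; since each $X_i\subseteq Q_i$, this descriptor has size $O(\sum_i |Q_i|)$, which is polynomial in the input, and the transitions of $\|_{i=1}^{n}\mathrm{Obs}(G_i)$ are computed on the fly. At each visited state the algorithm tests whether $X_1\times\cdots\times X_n\subseteq Q_S$; when the secret is given by local secret sets, i.e.\ $Q_S=\prod_{i}Q_{S,i}$, this reduces to checking $X_i\subseteq Q_{S,i}$ for every $i$ (all reachable $X_i$ being nonempty) and is clearly carried out in polynomial space. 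This gives a nondeterministic polynomial-space procedure for non-opacity, and since $\mathrm{PSPACE}$ is closed under complement, opacity itself lies in $\mathrm{PSPACE}$.

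For the lower bound, PSPACE-hardness is inherited from the monolithic case: deciding current-state opacity of a single DES is PSPACE-complete~\cite{CassezDM12}, and a single system trivially satisfies the hypothesis of the theorem (it shares no event with any other system), so it is a legitimate instance of the restricted modular opacity problem. Combining the two bounds yields PSPACE-completeness.

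The hard part will be the structural lemma, and specifically the nontrivial inclusion asserting that \emph{every} tuple of $X_1\times\cdots\times X_n$ is globally reachable under the same observation, rather than merely that the global estimate is contained in the product. Proving it amounts to scheduling the local private $\varepsilon$-moves of all components consistently around the synchronized observable events, which is exactly where disjointness of the private alphabets is essential and is the reason the restricted problem drops from EXPSPACE down to PSPACE. A secondary point to settle is the encoding of $Q_S$: the polynomial-space subset test above is immediate when $Q_S$ is described through local secrets, and one should make explicit the compact representation of $Q_S$ being assumed so that the test $X_1\times\cdots\times X_n\subseteq Q_S$ remains in polynomial space.
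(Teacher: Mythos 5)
Your proposal follows essentially the same route as the paper's proof: reduce to the parallel composition of local observers via the interleaving argument from Theorem~\ref{thm2}, run a nondeterministic polynomial-space search for a reachable estimate $(X_1,\ldots,X_n)$ with $X_1\times\cdots\times X_n\subseteq Q_S$, close under complement, and inherit hardness from the monolithic case~\cite{CassezDM12}. Your remark about the representation of $Q_S$ is a fair point of care, but it does not change the argument, which is correct as given.
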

  \begin{proof}
    If the projection erases only private events, we have that $P(\|_{i=1}^{n} G_i)$ is opaque if and only if $\|_{i=1}^{n} P(G_i)$ is opaque. The proof is similar to that in the proof of Theorem~\ref{thm2}. Abusing the notation by denoting the observer of $G_i$ as $P(G_i)$, we obtain that deciding opacity for $\|_{i=1}^{n} P(G_i)$ can be done in PSPACE using the nondeterministic search method, since every state of $\|_{i=1}^{n} P(G_i)$ consists of $n$ subsets each of size at most $k$, where $k$ is the maximum number of states among all $G_i$, which is polynomial in the size of the input. 
    More specifically, to check that the system is not opaque, we guess a reachable state $(X_1,\ldots,X_n)$ of $\|_{i=1}^{n} P(G_i)$ such that $X_1\times\cdots\times X_n$ is included in the set of secret states $Q_S$. Since PSPACE is closed under complement, checking opacity is also in PSPACE. Moreover, since PSPACE-hardness is known~\cite{CassezDM12}, the problem is PSPACE-complete.
  \end{proof}

\section{Modular A-Diagnosability}

Fault diagnosis is an important task in discrete event systems.
Let $\Sigma_F\subseteq\Sigma$ be a set of fault events,
and let $L_F=\Sigma^*\Sigma_F\Sigma^*$ be the set of all trajectories that contain a fault event.
The goal of fault diagnosis is to detect the occurrence of fault events.
To this end, several different notions of diagnosability have been proposed in the literature.
For example, the diagnosability problem of Sampath et al.~\cite{sampath1995diagnosability} requires that the occurrence of a fault can always be detected within a finite delay.
This problem is polynomial-time decidable for monolithic systems~\cite{jiang2001polynomial,yoo2002polynomial}
and PSPACE-complete for modular systems~\cite{YinLafortune17}.

Thorsley and Teneketzis~\cite{thorsley2005diagnosability} proposed a weaker version of diagnosability, so-called A-diagnosability.
Compared to diagnosability, where a fault has to be detected on every path after it happens within a finite delay, A-diagnosability requires that for any  path that contains a fault event,
it always has an extension where a fault can be detected.

\begin{defn}[A-diagnosability~\cite{thorsley2005diagnosability}]\label{def:adiag}
  A discrete event system $G = (Q, \Sigma, \delta, I)$ is A-diagnosable with respect to
  $\Sigma_{uo}$ and $\Sigma_F$ if for any fault trajectory, there exists an extension under which a fault event has occurred, that is,
    $(\forall s\in L(G)\cap L_F)
    (\exists t\in L(G)/s)
    [ P^{-1}P(st) \cap L(G)\subseteq L_F ]$,
  where $L(G)/s = \{t\in\Sigma^* \mid st \in L(G)\}$.
\end{defn}

  Intuitively, A-diagnosability says that
  after the occurrence of any fault,
  we can detect the occurrence of a fault  \emph{with probability one}. That is,
  if we assume that each transition is assigned a non-negative transition probability, then the probability of detecting the fault converges to one as the length of the trajectory increases.
  In fact, A-diagnosability is originally defined for stochastic discrete event systems.
  Formally, let $p\colon Q\times\Sigma\times Q\to[0,1]$ be a transition probability function that assigns to each transition of $G$ a non-negative probability. Then the originally definition of A-diagnosability~\cite{thorsley2005diagnosability} requires that
  $(\forall \epsilon>0)(\exists N\in\mathbb{N})(\forall s\in L(G)\cap L_F)(\forall n\geq N)
  [Pr(t: P^{-1}P(st) \cap L(G)\not\subseteq L_F\mid t\in L(G)/s\wedge |t|=n)<\epsilon]$,
  where $Pr(\cdot)$ above denotes the  probability of occurrence of all continuations of $s$ with length $n$ under which a fault event cannot be detected unambiguously.
  However, this definition does not depend on the transition probability function $p$,
  since for any DES $G$ satisfying Definition~\ref{def:adiag},
  term $Pr(\cdot)$ is always decreasing as $n$ increases --
  the transition probability function $p$ only affects the decreasing rate, i.e.,
  for each $\epsilon$, the specific value of $N$ may be affected but the existence of such an integer is not.
  Therefore, we use an equivalent logical definition here to simplify our proof.
  
  The modular version of the problem is defined as follows.
  \begin{defn}[Modular A-diagnosability]
    Given a set of discrete event systems $\{G_1,G_2,\ldots,G_n\}$, a set of unobservable events $\Sigma_{uo}$, and a set of fault events $\Sigma_F$. The modular A-diagnosability problem asks whether the discrete event system $G_1\|G_2\| \cdots \| G_n$ is A-diagnosable with respect to $\Sigma_{uo}$ and $\Sigma_F$.
  \end{defn}

Bertrand et al.~\cite{bertrand2014foundation} and Chen et al.~\cite{chenrevised} have shown that testing A-diagnosability is PSPACE-complete for monolithic systems. Hereafter, we show that this problem is EXPSPACE-complete for modular systems.
\begin{thm}\label{a-diag-thm}
  Deciding modular A-diagnosability is EXPSPACE-complete.
\end{thm}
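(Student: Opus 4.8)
The plan is to establish the two directions of EXPSPACE-completeness separately, following the pattern already set by Theorem~\ref{thm1}. Membership in EXPSPACE is the routine direction: given the modules $G_1,\ldots,G_n$, I would form the overall system $G=\|_{i=1}^n G_i$ whose state set consists of $n$-tuples and is therefore of at most exponential size. The standard test for A-diagnosability on a monolithic system---which is PSPACE-complete by Bertrand et al.~\cite{bertrand2014foundation} and Chen et al.~\cite{chenrevised}---proceeds by constructing a verifier/twin-machine on the state space of $G$ and searching for a particular reachable cycle (a loop of ambiguous, fault-carrying states). Since that monolithic procedure uses space polynomial in the size of its input automaton, and here the input automaton $G$ has exponentially many states, running it on $G$ with the usual on-the-fly nondeterministic-search strategy---storing only the current verifier state (a constant number of $n$-tuples, each of size $n\cdot k$) rather than the whole state space---uses exponential space. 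Invoking Savitch's theorem and closure of EXPSPACE under complement then places the problem in EXPSPACE.

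\textbf{For the hardness direction,} I would reuse the construction from the proof of Theorem~\ref{thm1} essentially verbatim and reduce acceptance of a $2^n$-space-bounded Turing machine $\M$ on input $x$ to modular A-diagnosability, arguing that the modular system built there is A-diagnosable if and only if $\M$ \emph{accepts} $x$ (equivalently, if and only if it is weakly detectable). Concretely, I would take the same automata $G_i$, $i=1,\ldots,m$, the same observable alphabet $\Delta\cup\{\diamond\}$, and designate a single fresh fault event. The cleanest way to wire this in is to introduce one new fault event $f$ (unobservable) and route the system so that a fault is ``seen'' precisely through the $\diamond$-transition into the $q_{f_i}$ states: after reading a candidate word $w\diamond$, the composition either sits unambiguously in the all-$q_s$ state $\bar q_s$ (when $w$ encodes an accepting run, so $w\notin L(E)$) or is additionally in some $q_{f_i}$-state (when $w\in P(L_m(G_i))$, i.e.\ $w$ is \emph{not} a valid accepting encoding). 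The fault-detection requirement of Definition~\ref{def:adiag} then translates, under the projection to $\Delta\cup\{\diamond\}$, into exactly the same singleton-reachability condition that governed weak detectability in Theorem~\ref{thm1}.

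\textbf{The main obstacle} I anticipate is the mismatch between the logical shapes of the two properties: weak detectability is an existential condition over trajectories ($\exists s\in L^\omega(G)$), whereas A-diagnosability is a universal-then-existential condition quantified over \emph{fault} trajectories ($\forall s\in L(G)\cap L_F$, $\exists t\in L(G)/s$). To bridge this, I must place the fault event $f$ so that every maximal trajectory is forced to become a fault trajectory (for instance, by prefixing the whole system with a single $f$-transition out of the initial states, making $L(G)\cap L_F$ cofinite in the relevant sense), so that the outer universal quantifier ranges over essentially all behaviors, and then show that the inner ``there exists an extension that unambiguously detects the fault'' is satisfiable along some trajectory exactly when the detector reaches a confident (singleton, fault-confirmed) state---which by the analysis of Theorem~\ref{thm1} happens along some trajectory if and only if $\M$ accepts $x$. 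I would verify that the deadlock-freeness and no-unobservable-loop assumptions are preserved by the added $\diamond$-self-loops (which are observable) and that the construction remains polynomial-time, since $m=|\Delta|^3+5$ and each $G_i$ is built in polynomial time as before. The remaining bookkeeping---checking that the extension $t$ witnessing detection can always be taken to emit the observable $\diamond$ and that the probability-one reformulation in Definition~\ref{def:adiag} is faithful---follows the same reasoning already used for the detectability and opacity reductions.
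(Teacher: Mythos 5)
Your membership argument is fine and matches the paper's. The hardness direction, however, has a genuine gap that would make the reduction fail. Your proposed fix for the quantifier mismatch --- prefixing the whole system with a single $f$-transition out of the initial states so that every trajectory becomes a fault trajectory --- is self-defeating: if (essentially) every run of $G$ contains $f$, then $P^{-1}P(st)\cap L(G)\subseteq L_F$ holds vacuously for every $s$ and $t$, and the system is A-diagnosable regardless of whether $\M$ accepts $x$. A-diagnosability is only nontrivial when faulty runs can be confused with \emph{non-faulty} runs having the same projection, and your sketch never says where those non-faulty twins come from. Relatedly, identifying fault detection with ``the detector reaches a singleton state'' imports the weak-detectability condition into a property with a different logical shape; reaching a singleton observer state says nothing by itself about whether all words with that observation contain $f$.

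The paper's construction supplies exactly the two gadgets you are missing. First, it adds new states $q_{s_i}'$ together with an unobservable fault transition $(q_{s_i},f,q_{s_i}')$ \emph{and} unobservable non-fault bypass transitions $(q_{f_i},\Box_j,q_{s_i}')$ (and $(q_{s_i},\Box_j,q_{s_i}')$ for $j\neq i$), so that after any $w\diamond$ with $w\in P(L_m(G_i))$ for some $i$ --- i.e.\ precisely when $\M$ rejects $x$ on that candidate encoding --- the state $\bar q_s'$ is reached both with and without a fault under the same observation; this is what kills A-diagnosability in the rejecting case. Second, it adds a reset transition $(q_{s_i}',\Box,I_i)$ back to the initial states, which is what handles the universal quantifier over fault trajectories in the accepting case: from \emph{any} fault trajectory $s$ one extends by some $t'\Box$ to reset, then runs the accepting encoding $w\diamond f\Box$, after which the observed $\Box$ can only have been enabled via $f$ (since no $q_{f_i}$ is present), certifying the fault. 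Without both gadgets --- the confusable faulty/non-faulty pair and the reset --- neither direction of the equivalence ``A-diagnosable iff $\M$ accepts $x$'' goes through.
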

\begin{proof}
Checking A-diagnosability for a monolithic system can be done in PSPACE~\cite{bertrand2014foundation,chenrevised}, and hence EXPSPACE is sufficient for modular systems.

We now show that the problem is EXPSPACE-hard. To this aim, we reuse and slightly modify the proof of Theorem~\ref{thm1} to show that a $2^n$-space-bounded Turing machine $\M$ accepts an input $x$ of length $n$ if and only if the modular system we construct is A-diagnosable.

Let $G_i$ be the automata constructed in the proof of Theorem~\ref{thm1}.
Let $f$ be a new unobservable event, which is the sole fault event. Let $\Box$ be a new observable event, and for $i=1,\ldots,m$, let $\Box_i$ be a new unobservable event.

We further modify every $G_i$ by adding the transitions
\begin{enumerate}
  \item $(q_{s_i},f,q_{s_i}')$, where $q_{s_i}'$ is a new state,
  \item $(q_{s_i}',\Box,I_i)$, where $I_i$ are the initial states of $G_i$,
  \item $(q_{s_i},\Box_j,q_{s_i}')$, for  $j\neq i$, and
  \item $(q_{f_i},\Box_j,q_{s_i}')$, for  $j=1,\ldots,m$.
\end{enumerate}

Intuitively, having read a word $w\diamond$, for some $w\in\Delta^*$, the modular system is in state $\bar{q}_s=(q_{s_1},q_{s_2},\dots,q_{s_m})$ and perhaps also in a state $q=(\ldots,q_{f_i},\ldots)$. From state $\bar{q}_s$, it can go to state $\bar{q}_s'=(q_{s_1}',q_{s_2}',\dots,q_{s_m}')$ under $f$, where the fault occurs. But the system can also go to state $\bar{q}_s$ from $q$ under $\Box_i$, which is a word with the same projection but without the fault event $f$. Notice that $\Box_i$ is possible only if there is state $q_{f_i}$, for some $i$, indicating that $G_i$ marks word $w$.
This mechanism prevents $q$ from going to state $\bar{q}_s'$ if none of $G_i$ marks $w$.
A conceptual illustration is provided in Fig.~\ref{fig_i}.
  \begin{figure}[ht]
    \centering
      \begin{tikzpicture}[>=stealth',auto,baseline,->,shorten >=1pt,node distance=2.5cm, state/.style={ellipse,minimum size=7mm,very thin,draw=black,initial text=}]
      \node[state]            (1p) {initial};
      \node[state]            (1r) [below right of=1p] {$\bar{q}_s$};
      \node[state]            (11) [below left of=1r] {$\bar{q}_s'$};
      \node[state]            (1s) [below left of=10] {$q$};

      \path
        (1s) edge node {$\Box_i$} (11)
        (1r) edge node {$f$} (11)
        (11) edge[out=90,in=270] node {$\Box$} (1p)
        (1p) edge node[above left] {$w_2\diamond$} (1s)
        (1p) edge node {$w_1\diamond$} (1r)
        ;
      \end{tikzpicture}
      \caption{An illustration of the proof of Theorem~\ref{a-diag-thm}}
      \label{fig_i}
  \end{figure}
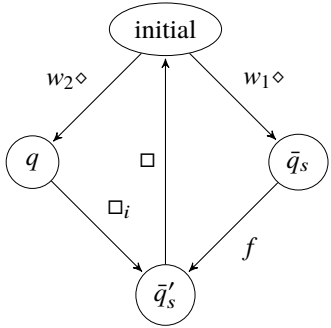

Hereafter, we show that $\M$ accepts $x$ if and only if $\|_{i=1}^mG_i$ is A-diagnosable with respect to fault events $\{f\}$ and unobservable events $\Sigma\setminus (\Delta\cup\{\diamond,\Box\})$.

Suppose that $\M$ does not accept $x$.
Then, as shown in the proof of Theorem~\ref{thm1}, for every word $w\in\Delta^*$, $P(w)\diamond$ ends up in at least two states, namely $\bar{q}_s$ and states of the form $q=(\ldots,q_{f_i},\ldots)$, where ``$\ldots$'' are either $q_{f_j}$ or $q_{s_j}$. Notice that there is an unobservable transition from $q$ under $\Box_i$ to $\bar{q}_s'=(q_{s_1}',\ldots,q_{s_m}')$.  That is, there are $w_1$ and $w_2$ such that $P(w_1)=P(w_2)=P(w)$ and $w_1\diamond$ leads to state $\bar{q}_s$ and $w_2\diamond$ to state $q$. Then $w_1\diamond f$ containing the fault event and $w_2 \diamond \Box_i \in P^{-1}P(w_1\diamond f)$ both end up in state $q_s'$. Then, only event $\Box$ is possible, which leads the system to the initial states. The proof now follows by induction. Therefore, for any extension of $w_1\diamond f$, there is always a path that bypasses the fault event $f$ due to some event $\Box_k$, and hence $P^{-1}P(w_1\diamond fw') \cap L(G)\not\subseteq L_F$ for any extension $w'$ of $w_1\diamond f$ in the system. Therefore, the system is not A-diagnosable.

On the other hand, if $\M$ accepts $x$, then there exists a word $w$ such that, observing $P(w)\diamond$, the system reaches only states $\bar{q}_s$ and $\bar{q}_s'$.
Let $s\in L(\|_{i=1}^mG_i)\cap L_F$ be a fault trajectory.
We now fix a word $t'$ such that $st'\Box$ is defined in the system; notice that such a word exists.
Then $t=t' \Box w\diamond f \Box\in L(\|_{i=1}^{m} G_i)$ is an extension of $s$ such that all words of $L(G)$ with the projection equal to $P(st)$ contain the fault event $f$. Indeed, after observing $P(st')\Box$, the observer of the system is in the initial state, and $P(w)\diamond$ leads the observer only to state $\{\bar{q}_s,\bar{q}_s'\}$. Now, when the observer sees event $\Box$, it is sure that the fault event $f$ has occurred, that is, $P^{-1}P(st) \cap L(G)\subseteq L_F$.
Since the fault trajectory $s$ was chosen arbitrarily, the system is A-diagnosable.
\end{proof}

A special case similar to that of Theorem~\ref{thm2} also holds for A-diagnosability.
\begin{thm}\label{thm20}
Let $\{G_1,G_2,\ldots,G_n\}$ be a set of discrete event systems and $P\colon\Sigma \to \Sigma_o$ be a projection such that all shared events of any two systems are included in $\Sigma_o$. Then deciding modular A-diagnosability is PSPACE-complete.
\end{thm}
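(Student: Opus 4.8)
The plan is to follow the template established in the proofs of Theorem~\ref{thm2} and Theorem~\ref{thm3}. The hardness direction is immediate: taking $n=1$, a single system vacuously satisfies the hypothesis that all shared events lie in $\Sigma_o$, so every monolithic A-diagnosability instance is a restricted modular instance, and PSPACE-hardness transfers directly from the monolithic case~\cite{bertrand2014foundation,chenrevised}. The real work is therefore in establishing membership in PSPACE.

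For membership, I would first reuse the commutation property proved inside Theorem~\ref{thm2}: when the projection erases only private events, a state $(x_1,\ldots,x_n)$ is reachable in $P(\|_{i=1}^{n} G_i)$ by $P(w)$ if and only if the componentwise state is reachable in $\|_{i=1}^{n} P(G_i)$ by $P(w)$. The new ingredient is that A-diagnosability refers to \emph{fault-labelled} runs, so I would augment each local module with a single fault bit recording whether a fault event has occurred in that module. Since fault events are unobservable and the hypothesis forces every unobservable event to be private, each fault event belongs to exactly one module; hence a module's state-and-bit depends only on that module's own run, and the interleaving argument of Theorem~\ref{thm2} carries over verbatim to the labelled automata. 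Consequently the global belief state reached under an observation decomposes as a product $B = B_1\times\cdots\times B_n$ of local labelled beliefs $B_i\subseteq Q_i\times\{0,1\}$, each of polynomial size.

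With this decomposition in hand, I would run the monolithic PSPACE procedure of Bertrand et al.~\cite{bertrand2014foundation} and Chen et al.~\cite{chenrevised} on-the-fly over their product/verifier structure, but store each configuration by its local components $(B_1,\ldots,B_n)$ together with the tracked faulty-run state, itself an $n$-tuple of local states, rather than as a single exponential-size belief. The total description is then polynomial, and the relevant predicates decompose cleanly: the global belief is \emph{fault-certain} (a subset of faulty states) exactly when some $B_i$ is fault-certain, and it is \emph{ambiguous} exactly when every $B_i$ still contains a non-faulty state. Thus successors are computed componentwise and each test runs in polynomial time. Searching for the reachable configuration and cycle that witness failure of A-diagnosability by nondeterministic guessing, and using that PSPACE is closed under complement, places the problem in PSPACE, which together with the hardness above yields PSPACE-completeness.

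The main obstacle I expect is the faithful decomposition of the fault-labelling and of the A-diagnosability witness. Unlike weak detectability, A-diagnosability has a universal flavour --- \emph{for every} continuation of a fault trajectory ambiguity must persist, i.e.\ there is no escape to a fault-certain belief --- so one must verify that this avoid-target / indeterminate-cycle condition remains correctly detectable when the structure is represented componentwise. In particular, the fault-certainty predicate ``$B\subseteq$ faulty states'' collapses to the local condition ``some $B_i$ is fault-certain'' \emph{only} because fault events are private, and confirming that the commutation lemma genuinely survives the augmentation to labelled states and to the product-with-itself structure needed for A-diagnosability, rather than merely to the observer of a single system, is the crux of the argument.
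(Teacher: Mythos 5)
Your proposal is correct and follows essentially the same route as the paper: use the commutation property from Theorem~\ref{thm2} to replace the observer of the composition by the composition of locally fault-augmented observers, note that fault-certainty of the product belief reduces to fault-certainty of some local belief precisely because the hypothesis makes all (unobservable) fault events private, and run a nondeterministic polynomial-space search, with hardness inherited from the monolithic case. The only cosmetic difference is that you phrase the final check via the Bertrand--Chen verifier that tracks a faulty run alongside the belief, whereas the paper states it directly as a forall--exists reachability condition on the composed observer states after converting A-diagnosability into a state-based property.
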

\begin{proof}
Since the problem is trivial for observable fault events, we only consider unobservable fault events. Then the assumption that all shared events are observable implies that
all fault events are private, that is, each fault event can only occur locally.
Let $\Sigma_{F_i} = \Sigma_F \cap \Sigma_i$ be the set of all local fault events, $i=1,\ldots,n$.

Notice that A-diagnosability is an event-based property and that fault events are erased in $P(G_i)$.
To address this issue, we reformulate A-diagnosability as a state-based property.
To this end, we assume, without loss of generality, that for each $G_i$,
its state-space $Q_i$ is partitioned as $Q_i=Q_{N_i} \dot{\cup}\, Q_{F_i}$ so that the system is in states of $Q_{N_i}$ as long as no fault has occurred and it is in states of $Q_{F_i}$ from the moment a fault has occurred on. This assumption can be easily fulfilled in polynomial-time by computing the product of $G_i$ with a two-state deterministic automaton marking the language $L_F$.
In this setting, A-diagnosability can be reformulated as follows:
  $(\forall s\in L(G):\delta(I,s)\subseteq Q_F)(\exists t\in L(G)/s) [R_G(st) \subseteq Q_F]$,
where $Q_F=\bigcup_{i=1}^{n} Q_1 \times\cdots\times Q_{i-1} \times Q_{F_i} \times Q_{i+1} \times\cdots\times Q_{n}$ is the set of all states where at least one component indicates that a fault event has occurred.

If the projection erases only private events, we have shown in the proof of Theorem~\ref{thm2} that a state is reachable in $P(\|_{i=1}^{n} G_i)$ by a word $P(w)$ if and only if it is reachable in $\|_{i=1}^{n} P(G_i)$ by $P(w)$. Therefore, to verify A-diagnosability of the modular system $\|_{i=1}^{n} G_i$ with respect to the set of fault events $\Sigma_F$, rather than to compute the observer of the modular system, we compute the composition of observers of $G_i$, denoted $\|_{i=1}^{n} P(G_i)$ in the rest of this proof.
Deciding A-diagnosability for $\|_{i=1}^{n} P(G_i)$ can be done in PSPACE by using the nondeterministic search method, since every state in the product of observers $\|_{i=1}^{n} P(G_i)$ consists of at most $n\cdot k$ states, where $k$ is the maximum of states among all $G_i$, which is polynomial in the size of the input.
To verify A-diagnosability then means to check that for every reachable state $(X_1,\ldots,X_n)$ of $\|_{i=1}^{n} P(G_i)$, if there is $i$ such that $X_i\cap Q_{F_i} \neq \emptyset$, then there is a state $(Y_1,\ldots,Y_n)$ reachable from $(X_1,\ldots,X_n)$ with the property that there is $j$ such that $Y_j \subseteq Q_{F_j}$.
Since PSPACE-hardness is known~\cite{bertrand2014foundation,chenrevised}, the problem is PSPACE-complete.
\end{proof}

\section{Conclusion}
In this paper, we showed that deciding weak detectability, opacity, and A-diagnosability for modular discrete event systems are EXPSPACE-complete problems.
Our results reveal that these properties are significantly more difficult to verify in the modular setting compared with their monolithic counterparts.
Special cases were identified where these properties can be checked in polynomial space.
Our results also reveal the connections and similarities among these properties from the structural point of view. 
\ack{The research of T. Masopust was supported by the GA\v{C}R project GA15-02532S and by RVO 67985840.}

\bibliographystyle{plain}
\bibliography{biblio}

\end{document}